\documentclass[10pt]{article}
\linespread{1}

\usepackage{amsthm}
\usepackage{amsfonts}
\usepackage{amssymb}
\usepackage{amsthm}
\usepackage{amsmath}    
\usepackage{hyperref}
\usepackage{bbm}
\usepackage{graphicx}
\usepackage{latexsym}
\usepackage{mathrsfs}
\usepackage{bbm}
\usepackage{slashed}

\usepackage{geometry}
\geometry{a4paper}


\DeclareMathAlphabet\EuFrak{U}{euf}{m}{n}	
\SetMathAlphabet\EuFrak{bold}{U}{euf}{b}{n}	


\newcommand{\cA}{\mathcal{A}}
\newcommand{\cB}{\mathcal{B}}
\newcommand{\cC}{\mathcal{C}}

\newcommand{\cE}{\mathcal{E}}
\newcommand{\cF}{\mathcal{F}}

\newcommand{\cH}{\mathcal{H}}

\newcommand{\cL}{\mathcal{L}}

\newcommand{\cP}{\mathcal{P}}

\newcommand{\cS}{\mathcal{S}}

\newcommand{\cU}{\mathcal{U}}





\newcommand{\rC}{\mathrm{C}}

\newcommand{\rAB}{{\mathrm{AB}}}
\newcommand{\rDHR}{{\mathrm{DHR}}}

\newcommand{\rc}{\mathrm{c}}






\newcommand{\bC}{\mathbb{C}}

\newcommand{\bI}{\mathbb{I}}

\newcommand{\bN}{\mathbb{N}}
\newcommand{\bQ}{\mathbb{Q}}
\newcommand{\bR}{\mathbb{R}}
\newcommand{\bU}{\mathbb{U}}
\newcommand{\bZ}{\mathbb{Z}}



\newcommand{\si}{\sigma}

\newcommand{\eps}{\varepsilon}

\newcommand{\io}{\iota}


\newcommand{\Lb}{{\left\langle \right.}}
\newcommand{\Rb}{{\left. \right\rangle}}

\newcommand{\bs}{\boldsymbol}


\newcommand{\supp}{\mathrm{supp}}









  \theoremstyle{plain}
  \newtheorem{definition}{Definition}[section]
  \newtheorem{theorem}[definition]{Theorem}
  \newtheorem{proposition}[definition]{Proposition}
  \newtheorem{corollary}[definition]{Corollary}
  \newtheorem{lemma}[definition]{Lemma}
  
  \theoremstyle{definition}
  \newtheorem{remark}[definition]{Remark}

\begin{document}

\author{
\textsc{Claudio Dappiaggi$^{a}$ \ \ Giuseppe Ruzzi}$^{b}$  \ \    \textsc{Ezio Vasselli}$^{b}$\\[5pt]
\small{$^{a}$  Dipartimento di Fisica -- Universit{\`a} di Pavia }\\
\small{\& INFN, Sezione di Pavia -- Via Bassi 6, 27100 Pavia, Italy.}\\
\small{$^{b}$  Dipartimento di Matematica, Universit\`a di Roma ``Tor Vergata'',}\\
\small{Via della Ricerca Scientifica 1, I-00133 Roma,  Italy.}  \\
\small{\texttt{claudio.dappiaggi@unipv.it},} \\[5pt]
\small{\texttt{ruzzi@mat.uniroma2.it},} \\[5pt]
\small{\texttt{ezio.vasselli@gmail.com}}\\\\
}

\title{Aharonov-Bohm superselection sectors}
\maketitle

\begin{abstract}
We show that the Aharonov-Bohm effect finds a natural description in the setting of QFT on curved spacetimes 
in terms of superselection sectors of local observables. 

The extension of the analysis of superselection sectors from Minkowski spacetime to 
an arbitrary globally hyperbolic spacetime unveils the presence of a new quantum number labeling charged superselection sectors. In the present paper we show that this ``topological" quantum number 
amounts to the presence of a background flat potential which rules the behaviour of charges when transported along paths as in the Aharonov-Bohm effect. To confirm these abstract results we quantize the Dirac field in presence of a background flat potential and show that the Aharonov-Bohm phase gives an irreducible representation of the fundamental group of the spacetime labeling the charged sectors of the Dirac field.


%
%
We also show that non-Abelian generalizations of this effect are possible only on spacetimes
with a non-Abelian fundamental group.
\end{abstract}

\markboth{Contents}{Contents}


\section{Introduction.}
\label{intro}

The analysis of superselection sectors is one of the central results of algebraic quantum field theory.
It allows to derive from first principles particle statistics, the particle-antiparticle correspondence, quantum charges and, as a consequence, the appearance of global gauge symmetries, independently of the model under consideration
\cite{DR90}.

In the original formulation, superselection sectors were defined by Doplicher, Haag and Roberts in terms
of localized and transportable endomorphisms $\rho(o) : \cA \to \cA$ of the global observable $\rC^*$-algebra $\cA$, 
where $o$ is a double cone in Minkowski spacetime. In this way a new representation (sector)
\[
\pi : \cA \to \cB(\cH_0) \ \ \ , \ \ \ \pi := \pi_0 \circ \rho(o) \, ,
\]
is defined starting from a vacuum representation $\pi_0$.
The term \emph{localized} indicates that $\rho(o) \restriction \cA_e$ is the identity 
on any local algebra $\cA_e \subset \cA$ generated by observables localized in a double cone
$e$ causally disjoint from $o$, while \emph{transportable} means that, for any double cone $a$, $\rho(o)$  
is unitary equivalent to an endomorphism localized in $a$.  
This yields the DHR selection criterion \cite{Haa}.

\medskip

It was later recognized by Roberts that an equivalent formulation of superselection sectors could be achieved
by considering \emph{charge transporters}, that is, families $z$ of unitaries
$z(a,o) \in\cA_a$, $o \subset a$,
fulfilling a cocycle relation.
%
%
Each $z$ defines a family $\{ \rho^z(o) \}_o$ of localized endomorphisms, such that 
\[
z(a,o) \, \rho^z(o)(A) \ = \ \rho^z(a)(A) \, z(a,o) \ \ \ , \ \ \ A \in \cA \, .
\]
In Minkowski spacetime this approach turns out to be equivalent to the original one \cite{Rob0,Rob}.
Yet things change in the context of curved spacetimes $M$,
mainly due to topological obstructions that arise when the family of \emph{diamonds},
the analogues of double cones, is not upward directed under inclusion.
A first point is that the notion of an algebra of global observables is not well-defined
and must be replaced by the \emph{universal algebra}, indicated with $\vec{\cA}$ \cite{Fre},
having the property of lifting any family of representations
\[
\pi_o : \cA_o \to \cB(\cH) 
\ \ \ : \ \ \ 
\pi_a \restriction \cA_o \, = \, \pi_o
\, , \   o \subset a
\, .
\]
A second point is that localized endomorphisms alone cannot encode the whole physical content of sectors,
\cite{Ruz05,BR08}. Therefore the use of charge transporters becomes necessary in the more general scenarioof a curved spacetime $M$.

\medskip

A condition that must be imposed on $z$ in order to get a well-defined representation $\pi$ of $\vec{\cA}$ is 
\[
z_p \ := \ z(a,o_{n-1}) \cdots  z(o_2,o_1) z(o_1,a) \ \equiv \ \bI \, ,
\]
where $p := \{ a , o_1 , \ldots , o_{n-1} \}$ is a suitable covering (\emph{path-approximation}, \S \ref{Bb}) 
of an arbitrary loop (closed curve)  $\gamma : [0,1] \to M$. 
This condition is called \emph{topological triviality}, and the justification of this terminology
is given by the fact that $z_p$ only depends on the homotopy class of $\gamma$.
The question of what happens if topological triviality is not imposed then naturally arises.
An answer is given in a series of papers, in which the following facts were established:
\begin{enumerate}
\item Statistics, charge content and particle-antiparticle correspondence are well defined on generic 
      charge transporters. Thus they have all the properties that allow to
      assign a physical interpretation to topologically trivial transporters.
      Moreover, each $z$ defines a representation
      \begin{equation}
      \label{eq.intro1}
      \sigma : \pi_1(M) \to \bU(n) \, ,      
      \end{equation}
      which affects the transport of localized endomorphisms along loops.
      The integer $n \in \bN$ is called the \emph{topological dimension} \cite{BR08}. 
      The existence of such sectors for a massive Boson field in a 2-dimensional spacetime 
      has been proved in \cite{BFM09}.
\item Generic charge transporters do not define representations of $\vec{\cA}$. 
      Rather, they define representations of the net $\cA_{KM} := \{ \cA_o \}_o$ on flat Hilbert bundles over $M$,
      which correspond to covariant representations of a universal $C^*$-dynamical system
      $\alpha : \pi_1(M) \to {\bf aut}\cA_*$ 
      encoding parallel transport along loops \cite{RV11,RVCX,RVkhom}.
      It turns out that $\vec{\cA}$ is a quotient of $\cA_*$, corresponding to the set of representations of $\cA_*$
      with trivial action of $\pi_1(M)$ on the Hilbert space. 
\end{enumerate} 
The physical interpretation of these results was proposed by Brunetti and the second named author
in the scenario of the Aharonov-Bohm effect.
There, superposition of wavefunctions of charged particles is affected by phases (parallel transports) of type
\begin{equation}
\label{eq.intro2}
\sigma(\gamma) \ = \ \exp \oint_\gamma A \, ,
\end{equation}
where $A$ is a potential with vanishing electromagnetic field, and it is natural to note
the similarity with (\ref{eq.intro1}).
The suitable model in which such an interpretation can be tested at the level of quantum field theory 
is the Dirac field,
describing charged quantum particles in a curved spacetime $M$. 
A step in this direction has been made in \cite{VasQFT}, where it is
proven that introducing a background potential represented by a closed 1--form
yields a representation of the observable net $\cA_{KM}$ of the free Dirac field
over a flat Hilbert bundle over $M$.

\medskip

%
In the present paper we complete the above partial results and perform an analysis of 
the superselection structure of $\cA_{KM}$. Our main results show that:
\begin{itemize}
\item Superselection sectors $z$ define background flat potentials $A^z$, 
that are, at the mathematical level,
flat connections on suitable flat Hermitean vector bundles $\cL^z \to M$ whose rank coincides with the topological dimension of $z$ (Theorem \ref{Bd:9}).
We shall be interested, in particular, in those sectors with topological dimension 1, so that $\cL^z$ is a line bundle.
\item Superselection sectors with topological dimension 1 and "charge 1", in a sense that shall be clarified later, 
are in one-to-one correspondence with \emph{twisted field nets}. 
These are representations of the field net of the Dirac field on
flat Hilbert bundles over $M$ with monodromy given by the "second quantization"
of the monodromy of $\cL^z$ (Theorem \ref{thm.D2});
\item When no torsion appears in the first homology group of the spacetime, $\cL^z$ is topologically trivial.
In this case we are able to construct a twisted field solving the Dirac equation
with interaction term $A^z$, and the representation (\ref{eq.intro1}), with $n=1$, 
takes the form (\ref{eq.intro2}) (Theorem \ref{thm.D1}).
This occurs in physically relevant scenarios such as de Sitter and anti-de Sitter spacetimes,
as well as the spacetime complementary to the ideally infinite solenoid of the Aharanov-Bohm apparatus.
\end{itemize}
Both the twisted field net and the twisted Dirac field are indistinguishable from 
the analogous untwisted objects in simply connected regions of $M$.
This is the reason way $\cA_{KM}$ is able to detect them.
The non-trivial parallel transport carried by the family of localized endomorphisms defined
by a sector is interpreted in terms of the background flat potential, shedding some light on the role that the corresponding interaction plays on quantum charges.

\medskip

\noindent The paper is organized as follows.

In \S \ref{B} we recall results from \cite{BR08,Ruz05} describing the superselection structure,
say $Z^1(\cA_{KM})$, of a generic observable net. Moreover, by applying a result by Barrett \cite{Barr},
for any sector $z$ we exhibit a flat $\mathfrak{u}(n)$-connection such that (\ref{eq.intro1}) is 
the associated parallel transport. 
This yields a Chern character defined on $Z^1(\cA_{KM})$ with values in the odd cohomology 
$H^{odd}(M,\bR/\bQ)$ of the spacetime (Remark \ref{rem.Chern}).
The first component of the Chern character will be shown to correspond to Aharonov-Bohm phases
in the subsequent sections.

In \S \ref{C} we focus on the case in which $\cA_{KM}$ is the observable net of the free Dirac field. 
By applying results in \cite{BJL02}, we show that Haag duality holds in the GNS representation
$\pi_\omega$ of any pure quasi-free state $\omega$ of the Dirac-CAR algebra defined 
on the spinor space of a globally hyperbolic spacetime (Theorem \ref{ThtHaag}).
This, and the $III$-property fulfilled by the local Von Neumann algebras living in $\pi_\omega$, 
\cite{dAH06}, yields two key properties for the analysis of sectors of $\cA_{KM}$. 
Finally, we construct a family of topologically trivial sectors of $\cA_{KM}$
labeling the electron-positron charge (Remark \ref{rem.EC}).

In \S \ref{sec.DA} we construct Dirac fields interacting with background potentials $A$, which are closed 1--forms ($dA=0$).
These fields are ``twisted", as in \cite{Ish}, namely they are defined on the space of sections of 
the Dirac bundle tensor a flat line bundle with connection $A$.
We show that these fields are in one-to-one correspondence with sectors in $Z^1(\cA_{KM})$ such that $\cL^z$ is 
topologically trivial (Theorem \ref{thm.D1}).
The parallel transport defined by $A$ appears, from one side as a byproduct of the interaction term in the Dirac equation,
and on the other side as the parallel transport of DHR-charges (localized endomorphisms) of $\cA_{KM}$. 

In \S \ref{sec.E} we draw our conclusions and we illustrate further developments, in particular for
$\pi_1(M)$ non-Abelian and generalized Dirac fields with non-Abelian gauge group.

In Appendix \S \ref{App} we prove a technical result, allowing to adopt the usual Haag duality instead of the
punctured one in the analysis of sectors.

Some of the results of the present paper have been presented in \cite{VasBPS}
in a simplified form, in particular by avoiding to discuss Haag duality
and without going in details on the structure of twisted Dirac fields.
%

\section{Aharonov-Bohm effects in terms of sectors}
\label{B}

Aim of the present section is to describe the Aharanov-Bohm effect in terms of superselection sectors
of the net of local observables on a 4-d globally hyperbolic spacetime $M$. From the intrinsic viewpoint of the electromagnetic field and of the vector potential, this effect has been studied instead in \cite{Ben14,San14}. \smallskip

We start by recalling some basic facts on globally hyperbolic spacetimes and we introduce the net of local observables defined in a reference representation. 
We then discuss 
DHR-charges of the observable net in terms of charge transporters. These are 
cocycles depending on a suitable family of regions of the spacetime,  diamonds, taking values in the unitary group of the observable net. 
The key observation is that diamonds encode the fundamental group $\pi_1(M)$ of the spacetime $M$ and cocycles define representations of $\pi_1(M)$. The Aharonov-Bohm effect is manifested in cocycles which 
define nontrivial representations of the fundamental group of the spacetime. We shall prove that 
these cocycles are nothing but the holonomy of smooth flat connections 
acting on a DHR charge. The form of the action is ruled by the charge quantum numbers. \smallskip
  
The material appearing in the present section is essentially a convenient summary of \cite{BR08,Ruz05}.
Exceptions are Lemma \ref{Bd:6} and Theorem \ref{Bd:9}, where the interpretation in terms of flat connections is given.

%
%
%
%
%
%

\subsection{Local observables in a globally hyperbolic spacetime}
\label{Ba}

Let $M$ denote a 4--dimensional, connected, globally hyperbolic spacetime. 
\emph{Causal disjointedness} relation  is a symmetric 
binary relation $\perp$ defined on subsets of $M$ as follows: 
\begin{equation}
\label{Ba:0}
o\perp\tilde o \ \ \iff \ \ o\subseteq M\setminus J(\tilde o)\ ,
\end{equation}
where $J$ denotes the causal set of $o$. The causal complement 
of a set $o$ is the open set $o^\perp:=M\setminus cl(J(o))$. An
open set $o$ is causally complete whenever $o=o^{\perp\perp}$. \smallskip

A distinguished class of subsets of $M$ is given by the set $KM$ of  \emph{diamonds} $o \subset M$ \cite[\S 3.1]{BR08},
which satisfy useful requirements:
they are open, relatively compact, connected and simply connected, causally complete,
and the causal complement $o^\perp$ is connected. 
Moreover, diamonds form a base of neighbourhoods for the topology of $M$.\smallskip 

In the present paper we are interested in describing the observable net in a \emph{reference} 
representation playing the same r\^ole as the vacuum representation in Minkowski spacetime.
This is defined in terms of a net $\cA_{KM}$ given by the assignment of a von Neumann algebra 
$\cA_o \subset \cB(\cH_0)$ for any $a \in KM$. We assume the following standard properties:
\begin{itemize}
\item (\emph{Isotony})  $\cA_o\subseteq \cA_a$ for $o\subseteq a$,
\item (\emph{Causality}) $\cA_o \subseteq \cA_a^\prime$ for $o\perp a$,
      where $\cA_a^\prime$ stands for the commutant of $\cA_a$,
\item (\emph{The Borchers property}) for any inclusion $cl(o)\subset \tilde o$,
      if $E$ is a non vanishing orthogonal projection of $\cA_o$ then there is an isometry 
      $V\in\cA_{\tilde o}$ such that $VV^*= E$;
\item (\emph{Haag duality}) $\cA_o = \cap \big\{\cA'_a\, :\, a \perp o\}$, for any $o\in KM$.
\end{itemize}
In addition $\cA_{KM}$ is assumed to be
\emph{irreducible} $\left(\cup_o\cA_o\right)^\prime=\mathbb{C}\mathbbm{1}$, 
and \emph{outer regular}  $\cA_o= \cap_{cl(o)\subset a} \cA_{a}$.
%

\medskip

Meaningful examples satisfying the above properties
are the the observable net of the free scalar field \cite{Ver97} and, as we shall see, and that of the free Dirac field, in representations induced by pure quasi-free Hadamard states. 
Anyway the above properties are expected to hold for any Wightman field over $M$.

\subsection{1-cocycles of the observable net and topology} 
\label{Bb}

We introduce the mathematical structure underlying the charges studied in \cite{BR08}.
The main fact is that the theory is encoded by charge transporters which define representations of the observable net by means of Haag duality.
A crucial mathematical property is that charge transporters satisfy a cocycle equation with respect to the partially ordered set (poset)  $KM$ and, as a consequence, give a representation of the fundamental group of the spacetime. In the present paper we give an equivalent but simplified
exposition of cocycles which do not relies on simplicial sets.\medskip

We start by recalling some elements of the "geometry" of $KM$, the poset of diamonds ordered under inclusion $\subseteq$.
A pair $a,o \in KM$ is said to be \emph{comparable} whenever either $o\subseteq a$ or $a\subseteq o$,
and in this case we write $a \gtrless o$.
Any comparable pair $o \gtrless a$ defines two oriented "elementary" paths: $(a,o):o\to a$, starting from $o$ and ending to $a$, and its \emph{opposite} $\overline{(a,o)}:=(o,a):a\to o$. We call the \emph{support} of $(a,o)$ the set $|(a,o)|:=a\cup o$. 
Arbitrary paths are obtained by composing elementary paths: for instance, given $(a,o)$ and $(o,\tilde a)$ we can form the path 
\[
p:=(a,o)*(o,\tilde a):\tilde a \to a
\]
and the opposite $\overline{p}:a\to\tilde a $ of $p$ as the path 
\[
\overline{p}:=\overline{(a,o)*(o,\tilde a)}:=
\overline{(o,\tilde a)}*\overline{(a,o)}= (\tilde a, o)* (o,a)  \, .
\]
The composition and the operation of taking the opposite extend 
to arbitrary paths in an obvious way:  
given $p:a\to o $ and $q:o\to\tilde o$ then 
\[
q*p:a\to \tilde o \  \ ,  \ \ \overline{q*p}= \overline{p}*\overline{q}: \tilde o \to a \ . 
\]
A path of the type $p:o\to o$ is called a \emph{loop} over $o$. 
It is verified that $M$ is connected if, and only if, $KM$ is \emph{pathwise connected} 
meaning that any pair of diamonds $a,o$ can be joined by a path $p:a\to o$ \cite{Ruz05}.
There is a homotopy equivalence relation $\sim$ on the set of paths:
the quotient by $\sim$ of the set of loops over a fixed $a\in KM$ gives the homotopy group $\pi_1(KM,a)$ which does not depend, up to isomorphism, 
on $a$, because of pathwise connectedness.
The isomorphism class is called the \emph{fundamental group} of $KM$, written $\pi_1(KM)$.
A key result is that, given $o\in KM$ and $x_o\in o$, there is an isomorphism
\begin{equation}
\label{Ba:1}
 \pi_1(M,x_o) \ni [\gamma] \mapsto [p_\gamma]\in\pi_1(K,o) \ ,
\end{equation}
where $ \pi_1(M,x_o)$ is the homotopy group of $M$ based at $x_o \in o$ \cite{Ruz05}. 
In the previous expression $p_\gamma$ is a \emph{path-approximation} of the closed curve 
$\gamma:[0,1]\to M$: that is, a path 
\[
p_\gamma=(o_n,o_{n-1})*\ldots *(o_2,o_1)*(o_1,o_0)
\]
such that there is a partition 
\[
t_0=0<t_1\dots < t_{n-1}<t_n=1 \ \ : \ \ 
\gamma(t_i)\in o_{i} \ {\mathrm{and}} \ \gamma([t_{i-1},t_i])\subset o_{i-1}\wedge o_i \, , \, \forall i \, .
\]
Thus closed curves $\gamma$ and $\beta$ are homotopic, i.e. $[\gamma] = [\beta]$, if and only if $p_\gamma\sim p_\beta$, i.e. $[p_\gamma] = [p_\beta]$.

\medskip

\begin{remark}
We highlight a few consequences of the previous result which are of relevance to this paper. 
If $KM$ is directed under inclusion, then $\pi_1(KM)$ is trivial \cite{Ruz05}.
So, if $\pi_1(M)$ is not trivial then $KM$ is not upward directed. 
This is actually the physical situation we are interested in,
since Aharonov-Bohm type effects appear when the spacetime has a non-trivial fundamental group.
\end{remark}

\medskip

We now introduce the set $Z^1(\cA_{KM})$ of 1-cocycles of the observable net, i.e.\ the charge transporters. 
A \emph{1-cocycle of the net $\cA_{KM}$} is a map assigning to any \emph{comparable} pair $o \gtrless a \in KM$
a unitary  operator $z(a,o)\in \cA_{|a,o|}$,
fulfilling the property $z(a,o)=z(o,a)^*$ and the \emph{cocycle equation}
\begin{equation}
\label{Bb:1}
 z(\hat{o},o)= z(\hat o,a)\, z(a,o) \ , \qquad o\subseteq a\subseteq \hat o \, .
\end{equation}
The previous definition implies $z(o,o)=1$ for any $o \in KM$, since $z(o,o)=z(o,o)z(o,o)$ and $z(o,o)$ is unitary. 
1-cocycles are extended to paths by defining
\[
z(p):=z(o,o_{n})\cdots z(o_2,o_1)z(o_1,a) \ , \qquad p=(o,o_{n})*\cdots* (o_{2},o_1)*(o_1,a) \ .
\]    
Note that $z(\overline{p})=z(p)^*$.
Any $z\in Z^1(\cA_{KM})$ preserves the homotopy equivalence relation, so that the mapping
$[p] \mapsto z(p)$, $[p]\in\pi_1(KM,o)$,
is well-defined. By (\ref{Ba:1}), we get the representation
\begin{equation}
\label{Bc:1}
\si_z : \pi_1(M,x_o) \to \cU(\cH_0)
\ \ \ , \ \ \ 
\si_z([\gamma]) := z(p_\gamma) \ , \qquad [\gamma]\in\pi_1(M,x_o)
\, ,
\end{equation}
where $p_\gamma:o\to o$ is a path-approximation of $\gamma$.
A 1-cocycle $z$ is said to be \emph{topologically trivial} whenever $\si_z$ is the trivial representation.

\smallskip

An \emph{intertwiner} between two 1-cocycles $z,\tilde z$ is a map
$t_a \in \cA_a$, $a \in KM$,
fulfilling the relations
$t_{a}\, z(p)=\tilde z(p)\, t_{o}$ for all $p:o\to a$.
The set of \emph{intertwiners} between $z,\tilde z$ is denoted by $(z,\tilde z)$.
The 1-cocycles $z,\tilde z$ are said to be \emph{unitarily equivalent} if there exists $t\in(z,\tilde z)$
such that $t_a$ is unitary for any $a \in KM$.
The 1-cocycle $z$ is said to be \emph{irreducible} whenever $(z,z)=\mathbb{C} \bI$, where $\bI \in \cB(\cH)$
is the identity,
and \emph{trivial} if it is unitarily equivalent to the trivial 1-cocycle $\io(a,o) \equiv \bI$.
It is easily seen that cocycles form the set of objects for a category $Z^1(\cA_{KM})$,
with the corresponding sets of intertwiners as arrows.
We denote the full subcategory of $Z^1(\cA_{KM})$ whose objects are topologically trivial 1-cocycles 
by $Z^1_t(\cA_{KM})$ \smallskip

\subsection{Charge quantum numbers} 
\label{Bc}
The physical content of 1-cocycles is encoded in the charge quantum numbers associated to their equivalence classes. 
In this section we shall recall how these quantum numbers arise and explain the nature of these charges. As already mentioned, 
these results have been proved in \cite{Ruz05,BR08}. There is, however, an important difference: in the present 
paper we assume Haag duality, whilst in the above references a stronger form of duality is used, the \emph{punctured} Haag duality.
In the appendix we show succinctly that Haag duality suffices.\smallskip

Using the defining properties of the observable net one can introduce a tensor product $\times$ (charge composition)
and a permutation symmetry $\eps$ (statistics), making $Z^1(\cA)$ and $Z^1_t(\cA)$ symmetric, tensor $\rC^*$-categories. In particular, one can identify a subset of 1-cocycles called \emph{objects with finite statistics} which turn out to be a finite direct sum of irreducible 1-cocycles.
The unitary equivalence class of any such irreducible object $z$ is classified by the following charge quantum numbers: 
\begin{itemize}
\item the \emph{statistical phase}  $\kappa(z)\in\{-1,1\}$ distinguishing between Fermi and Bose statistics; 
\item the \emph{statistical dimension} $d(z)\in \mathbb{N}$ giving the order of the (para)statistics; 
\item the \emph{charge conjugation} assigning to $z$ a conjugated irreducible 1-cocycle $\bar{z}$ 
      which has the same statistical phase and dimension as $z$.
\end{itemize}
We stress that objects with statistical dimension $1$ obey the  Bose/Fermi statistics according to their statistical phase.
We denote the subcategory of \emph{1-cocycles with finite statistics} by  $Z^1_\rAB(\cA)$ 
and its subcategory of \emph{topologically trivial 1-cocycles} by $Z^1_\rDHR(\cA)$. \smallskip  

To complete the physical interpretation  we need to observe that any 1-cocycle $z\in Z^1_\rAB(\cA)$ splits in two parts: the charged and the topological component. 
To see this we need a \emph{path frame} $P_e$ consisting of a collection of arbitrary paths 
$p_{ae}:e\to a$
joining a fixed diamond $e$, the pole,  with any $a$, and such that $p_{ee}$ is the trivial path $(e,e)$.
Once a path frame $P_e$ is given, \emph{the charged component} of $z$ is 
a topologically trivial 1-cocycle $z_{\mathrm{c}}$ in $Z^1_\rDHR(\cA_{KM})$ defined as 
\begin{equation}
\label{Bd:1}
z_{\mathrm{c}}(\tilde a,a):= z(p_{\tilde ae}*p_{ea}) \ , \qquad     a \gtrless \tilde a  \, ,
\end{equation}
where $p_{e\tilde a}$ is the opposite $\overline{p_{\tilde a e}}$ of $p_{\tilde a e}$. This cocycle encodes 
the charge content of $z$ as
the maps $z\mapsto z_{\mathrm{c}}$ and  $(z,\tilde z)\ni s \to s \in(z_{\mathrm{c}}, {\tilde z_{\mathrm{c}}})$   
define a functor 
\begin{equation}
\label{eq.charge}
Z^1_\rAB(\cA_{KM}) \to Z^1_\rDHR(\cA_{KM}) \, ,
\end{equation}
preserving the tensor product, the permutation symmetry and, hence, the conjugation\footnote{The functor is not full, as we may have $(z,\tilde z)\subsetneq (z_{\mathrm{c}}, {\tilde z_{\mathrm{c}}})$ \cite{BR08}}. On the other end, setting 
\begin{equation}
\label{Bd:2}
u_z(\tilde a,a) := z(p_{e\tilde a}*(\tilde a,a)*p_{ae}) \ , \qquad  \tilde a \gtrless a  \ , 
\end{equation}
one gets a 1-cocycle $u_z$, \emph{the topological component} of $z$,  taking values in the algebra of the chosen  pole $e$. i.e. \  
$u_z(\tilde a,a) \in \cA_e$ for all $\tilde a \gtrless a$. It encodes the topological content of $z$ since it defines the same representation (\ref{Bc:1}). Finally, the elements of $Z^1_\rAB(\cA_{KM})$ 
are completely characterized by their topological and the charged component since  
\begin{equation}
\label{Bd:3}
z(\tilde a,a) = (u_z\Join z_{\mathrm{c}})(\tilde a,a) =  
\alpha_{\tilde ae}(u_z(\tilde a,a))\cdot z_{\mathrm{c}}(\tilde a,a) \ ,\qquad 
\tilde a \gtrless a  \ , 
\end{equation}
where
$\alpha_{\tilde ae}(A):= z_{\mathrm{c}}(p_{\tilde ae})\, A \,z_{\mathrm{c}}(p_{\tilde ae})^*$. The composition 
$\Join$ is called the \emph{join}.   
These constructions do depend neither on the 
choice of the pole $e$ nor on that of the path frame $P_e$: different choices lead
to equivalent, in the corresponding categories, charged and topological components.\smallskip 

Now, any 1-cocycle $z$ defines representations of the observable net which are sharp excitations
of the reference one. To be precise,  given a diamond $o$ define 
\begin{equation}
\label{Bd:3a-1}
\rho^z(o)_a(A) \, := \, z(p) A z(\bar p) \ , \qquad  a\in KM,  \, A\in\cA_a \, ,  
\end{equation}
where $p : e \to o$ is an arbitrary path such that $e \subset a^\perp$.  
By \eqref{app:1}, given in appendix, this definition is independent of $p$, and $\rho^z(o)_a:\cA_a\to\cB(\cH_0)$ is a unital morphism 
satisfying the relation $\rho^z(o)_{a} = \rho^z(o)_{\tilde a}\restriction \cA_a$ for any inclusion $a\subseteq \tilde a$.
This amount to saying that the family $\rho^z(o) := \{\rho^z(o)_a\}_{a\in KM}$ defines a Hilbert space representation
\begin{equation}
\label{Bd:3a-0}
\rho^z(o) : \cA_{KM} \to \cB(\cH_0)
\end{equation}
in the sense of \cite{RV11}. 
Furthermore, always using the property \eqref{app:1}, we find that $\rho^z(o)$ is \emph{localized} in $o$, that is,
\begin{equation}
\label{Bd:3a}
 \rho^z(o)_a= \mathrm{id}_{\cA_a} \, , \qquad  a\perp o \ , 
\end{equation}
and \emph{transportable} i.e. 
\begin{equation}
\label{Bd:3b}
z(p) \rho^z(o) \ = \ \rho^z(a) z(p) \ , \qquad p:o\to a \, ,
\end{equation}
where the above equation should be read as 
\begin{equation}
\label{Bd:3b-a}
z(p) \cdot \rho^z(o)_e (A) \ = \ \rho^z(a)_e (A) \cdot z(p) \ \ \ ,\ \ \ e \in KM \, , \, A \in \cA_e \, . 
\end{equation}
Equation \eqref{Bd:3a} says that  $\rho^z(o)$ describes a charge localized in $o$, since it equals the reference representation in the causal complement of $o$. On the contrary equation \eqref{Bd:3b} illustrates the role of $z$ as a charge transporter along paths. The important point is that the charges described by AB and DHR cocycles are the same since $\rho^{z}(o)=\rho^{z_c}(o)$ for any $o\in KM$.						
This is an immediate consequence of the path independence of the definition of $\rho^z(o)$
and of the definition of the charged component $z_{\mathrm c}$. However an important difference arises when the 
charge is transported from $o$ to $a$ along to different paths $p,q:o\to a$. If $z$ is of DHR type then the transport is path-independent,
\[
z(p)\rho^z(o)z(\bar q)= \rho^z(a) \, ,
\] 
in fact $z(p*\bar q)=1$; whilst if $z$ is of AB type then 
\begin{equation}
\label{eq.Brem}
z(p)\rho^z(o)z(\bar q)= z(p*\bar q)\rho^z(a) = u_z(p*\bar q)\rho^z(a) \ .   
\end{equation}
Hence $z(p*\bar q)\rho^z(a)$ depends on the homotopy equivalence class of the loop  $p*\bar q$. In particular 
$z(p*\bar q)\ne 1$ if $p*\bar q \not\sim 1$. 
We conclude that \emph{the Aharonov-Bohm effect manifests itself in sectors of} AB \emph{type and,    
as we shall prove in the next section, it is due to a presence of a background flat potential}.


\subsection{Background flat potentials and the AB effect} 
\label{Bd}

In this section we complete the picture drawn in the preceding sections aimed at explaining the Aharonov-Bohm effect in terms of superselection sectors of local observables. In particular  we shall prove that the topological component of 1-cocycles 
are nothing but the holonomy of a flat connection.
To this end we shall use a result that has been rediscovered by several authors 
{\footnote{See, for example, \cite{Hor80} and references therein.}}.
We refer in particular to a paper by Barrett \cite{Barr}.

\medskip

Before proceeding it is convenient to specify the notion of flat connection.
Let $G$ be a compact Lie group with Lie algebra ${\EuFrak g}$ and $\cP \to M$ denote a smooth $G$-principal bundle.
We assume that $\cP$ is trivialized on any diamond, and that it is \emph{locally constant}
in the sense that it admits a set of \emph{locally constant} transition maps
\[
g_{oa} : o \cap a \to G \ \ \ , \ \ \ o,a \in KM \, .
\]
Thus $dg_{oa} \equiv 0$, and following \cite[\S I.1-2]{Kob} we define a flat connection on $\cP$ as a
collection $A$ of of ${\EuFrak g}$-valued 1--forms $A_o$ defined on $o \subset M$, fulfilling the relations
\[
A_a \restriction_{o \cap a}    \ = \    g_{oa}^* A_o g_{oa}  \restriction_{o \cap a}   \ \ \ , \ \ \ o \cap a \neq \emptyset \, .
\]
When $G = \bU(1)$, each $g_{oa}$ is a phase and $A_a \restriction_{o \cap a} = A_o \restriction_{o \cap a}$;
thus the forms $A_o$ can be glued and give a $\mathfrak{u}(1)$-valued 1--form that we denote again by $A$. 
Because of flatness, and since $\mathfrak{u}(1) \simeq \bR$, we may regard $A$ as a closed real 1--form, in symbols
\begin{equation}
\label{eq.AbConnForm}
A \in Z_{dR}^1(M) \ \ \ , \ \ \ dA = 0 \, .
\end{equation}
Now, following \cite{BR08},  if $z \in Z^1_{\mathrm{AB}}(\cA_{KM})$
is irreducible  with statistical dimension $d(z)$, the von Neumann algebra
\begin{equation}
\label{Bd:4}
\cA(M,o) \, := \, \{ z(p)=u_z(p) \ | \ p:o\to o \}'' \subseteq \cA_o
\end{equation}
is a factor of type $I_n$ with $n\leq d(z)$, where $d(z)$ is the statistical dimension of $z$. This amount to saying that 
$\cA(M,o)\cap \cA(M,o)'=\mathbb{C} 1$ and 
\begin{equation}
\label{Bd:5}
U\cA(M,o)U^* \ = \ 1_{\cH_0}\otimes\mathbb{M}_n(\mathbb{C}) \, ,
\end{equation}
for some unitary operator $U:\cH_0\to \cH_0\otimes \bC^n$.
The integer $n$ is an invariant of the equivalence class of $z$ which is called the \emph{topological dimension} of $z$ and it is denoted by $\tau(z)$.

%
%
%
\begin{lemma}
\label{Bd:6}
Let $z$ be an irreducible 1-cocycle of $Z^1_{\mathrm{AB}}(\cA_{KM})$ with topological dimension $n := \tau(z)$. 
Let $P_o$ be a path frame. 
Then there exists a smooth $\bU(n)$-principal bundle $\cP^z \to M$ with a flat connection $A^z$  such that
\begin{equation}
\label{lemBd:6}
u_z(\tilde a,a)
\ = \ 
U^* \Big( 1 \otimes {\mathrm{hol}}_{A^z} (\ell_{(\tilde a, a)}) \Big) \, U \, \in \cU(\cA_o) \ , 
\qquad \tilde a \gtrless a  \ , 
\end{equation}
where $U$ is the unitary realizing \eqref{Bd:5} and
${\mathrm{hol}}_{A^z} (\ell_{(\tilde a, a)}) \in \bU(n)$ is the holonomy over any closed curve  
$\ell_{(\tilde a, a)} : x_o \to x_o$, $x_o \in o$, approximated by $p_{o\tilde a}*(\tilde a,a)*p_{ao}$. 
\end{lemma}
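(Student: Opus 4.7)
The strategy is to extract from $u_z$ a finite-dimensional unitary representation of $\pi_1(M,x_o)$, and then invoke Barrett's theorem \cite{Barr} to realize it as the holonomy of a smooth flat connection on a suitable $\bU(n)$-principal bundle. The combinatorial-to-smooth bridge is provided by the path-approximation isomorphism (\ref{Ba:1}).

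First I would observe that, by definition (\ref{Bd:2}), the topological component $u_z(\tilde a,a)$ depends only on the loop $p_{e\tilde a}*(\tilde a,a)*p_{ae}$ based at the pole $e$; since $z$ is homotopy-invariant, the map $[\ell]\mapsto z(\ell)$ descends to a unitary representation of $\pi_1(KM,e)$ whose image lies in $\cU(\cA(M,e))$. The hypothesis $z$ irreducible in $Z^1_\rAB(\cA_{KM})$ with $\tau(z)=n$ gives, through (\ref{Bd:5}), a unitary $U$ with $U\cA(M,e)U^* = 1_{\cH_0}\otimes \bM_n(\bC)$. Conjugation therefore produces, for every loop $\ell$ at $e$, a unique $\widetilde\sigma(\ell)\in\bU(n)$ with $U\,z(\ell)\,U^* = 1_{\cH_0}\otimes\widetilde\sigma(\ell)$. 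Using the isomorphism (\ref{Ba:1}) and homotopy invariance, this defines a group homomorphism
\[
\widetilde\sigma \colon \pi_1(M,x_e) \to \bU(n) \, .
\]

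Next I would apply Barrett's theorem to $\widetilde\sigma$: because $M$ is a smooth connected manifold and $\bU(n)$ is a compact Lie group, every homomorphism $\pi_1(M,x_e)\to\bU(n)$ is the holonomy of a smooth flat connection on a smooth $\bU(n)$-principal bundle, unique up to equivalence. This produces the desired locally constant bundle $\cP^z\to M$ together with a flat connection $A^z$ such that, for any piecewise smooth closed curve $\gamma\colon x_e\to x_e$,
\[
\mathrm{hol}_{A^z}(\gamma) \;=\; \widetilde\sigma([\gamma]) \, .
\]

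Finally, I would verify (\ref{lemBd:6}) by matching both sides along the loop $\ell_{(\tilde a,a)}$. By construction $\ell_{(\tilde a,a)}$ is approximated by $p_{o\tilde a}*(\tilde a,a)*p_{ao}$, so (\ref{Ba:1}) combined with homotopy invariance of $z$ yields $z(p_{o\tilde a}*(\tilde a,a)*p_{ao}) = \sigma_z([\ell_{(\tilde a,a)}])$, i.e.\ exactly $u_z(\tilde a,a)$. Conjugating by $U$ and applying Barrett's identification gives the right-hand side of (\ref{lemBd:6}).

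The main obstacle I anticipate is the fine-print of Barrett's construction in the present setting: one must verify that the cocycle $u_z$, defined only on the combinatorial poset $KM$, corresponds via path approximation to a genuine holonomy of a smooth flat connection (rather than only a representation on the topological level). This is handled by the upper-semicontinuous refinement of diamonds as smooth tubular neighbourhoods of curves, which is precisely the content of (\ref{Ba:1}); everything else is bookkeeping about independence from the chosen path approximation and from the pole $e$.
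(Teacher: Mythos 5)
Your proposal is correct and follows essentially the same route as the paper's proof: both extract a unitary representation $\pi_1(M)\to\bU(n)$ from the topological component $u_z$ via the type $I_n$ factor structure (\ref{Bd:5}), then invoke Barrett's reconstruction theorem to realize that representation as the holonomy of a flat connection on a smooth $\bU(n)$-principal bundle, and finally translate back along path-approximations using (\ref{Ba:1}). The only cosmetic difference is that the paper also writes the bundle explicitly as $\cP^z = \hat{M}\times_\sigma\bU(n)$; this adds nothing to the proof of the stated identity, and your reduction to the classical correspondence between $\pi_1$-representations and flat bundles is sufficient.
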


\begin{proof}
As observed in the previous section $z$ and its topological component define the same representation $\si_z$  \eqref{Bc:1} of the fundamental group of $M$. As the algebra $\cA(M,o)$ is a type $I_{n}$ factor, there exists an irreducible $n$-dimensional representation $\si$ of the fundamental group of $M$ such that 
\begin{equation}
\label{Bd:7}
\si_z([\ell]) \ = \ u_z(p_\ell)= U^*(1_{\cH_0}\otimes\si([\ell]))U \ ,\qquad [\ell]\in\pi_1(M,x_o) \ , 
\end{equation}
where $U$ is the unitary realizing \eqref{Bd:5}. The key observation is that the  
mapping $\ell \mapsto\sigma([\ell])$ assigning to any closed curve $\ell$ over $x_o$ 
the unitary $\sigma([\ell]) \in \bU(n)$ is a holonomy map in the sense of Barrett \cite[cfr. Section 2.4]{Barr}. 
By Barrett's reconstruction theorem \cite{Barr}, $\ell \mapsto\si([\ell])$ is the parallel transport of a flat connection $A^z$ on a smooth $\bU(n)$-principal bundle $\cP^z$, that is,
$\si([\ell]) \ = \ \mathrm{hol}_{A^z} (\ell)$ for any closed curve $\ell:x_e\to x_e$.
%
%
Explicitly,
\begin{equation}
\label{eq.Pz}
\cP^z \, := \, \hat{M} \times_\sigma \bU(n) \, ,
\end{equation}
where $\hat{M}$ is the universal covering of $M$ carrying the well-known right $\pi_1(M)$-action, 
whilst $\hat{M} \times_\sigma \bU(n)$ is the quotient of $\hat{M} \times \bU(n)$ by the equivalence relation
$(y\ell,v) \simeq (y,\sigma(\ell)v)$, $y \in \hat{M}$, $\ell \in \pi_1(M)$, $v \in \bU(n)$.
Hence by \eqref{Bc:1} and \eqref{Bd:7} we have that 
\begin{equation}
\label{Bd:8}
u_z(p_\ell) \ = \ U\,\Big(\bI \otimes {\mathrm{hol}}_{A^z} (\ell) \Big)\,U^* \ , 
\end{equation}
for any loop $p_\ell$ and any closed curve $\ell$ such that $p_\ell$ is a 
path approximation of $\ell$. Given a comparable pair $\tilde a \gtrless a$, if we take the loop
$p_{o\tilde a}*(\tilde a,a)*p_{ao}$ and a closed curve $\ell_{(\tilde a,a)}:x_o\to x_o$  
such that $p_{o\tilde a}*(\tilde a,a)*p_{ao}$ is a path-approximation of $\ell_{(\tilde a,a)}$,  
the proof follows from \eqref{Bd:7}, \eqref{Bd:8} and from the definition of the topological component. 
\end{proof}

It is easily seen that a change of the path frame $P_o$ used in the previous proof 
yields a representation $\si' : \pi_1(M,x_{o'}) \to \bU(n)$ unitarily equivalent to $\si$.
This implies that we find a flat connection ${A'}^z$ providing, up to conjugation, 
the same parallel transport as $A^z$.
\begin{remark}[The Chern character on $Z^1_\rAB(\cA_{KM})$]
\label{rem.Chern}
Let $z$ denote an irreducible cocycle and let $\sigma : \pi_1(M) \to \bU(n)$, $n:=\tau(z)$, denote the associated representation of the fundamental group (\ref{Bd:7}).
By standard results in differential geometry, $\sigma$ defines a flat Hermitean vector bundle $E^z \to M$ 
with rank $n$, 
\begin{equation}
\label{eq.Es}
E^z \, := \, \hat{M} \times_\sigma \bC^n \, ,
\end{equation}
defined analogously to (\ref{eq.Pz}).
Using the results in \cite{CS85}, we can define the Chern character 
\begin{equation}
\label{eq.ccs}
ccs : Z^1_\rAB(\cA_{KM}) \to H^{odd}(M,\bR/\bQ)
\ \ \ , \ \ \ 
ccs(z) \, := \, n + \sum_k^n \frac{ (-1)^{k-1} }{ (k-1)! } c_k(E^z)_{ {\mathrm{mod}} \bQ }
\, ,
\end{equation}
defined via the Chern-Cheeger-Simons classes $c_k(E^z) \in H^{2k-1}(M,\bR / \bZ)$ \cite[\S 4]{CS85}.
The interpretation of $ccs(z)$ is that of a generalized Aharonov-Bohm phase, as we shall see in the next sections.
%
%
%
\end{remark}

\medskip

\noindent From now on, to  emphasize the geometric aspects, we shall write 
\[
{\mathrm{Hol}}_{A^z}(\tilde a, a)  \ := \  u_z(\tilde a,a) \ ,  \qquad 
 \tilde a \gtrless a \, ,
\]
instead of the expression appearing in (\ref{lemBd:6}).
We can state the main result of this section. 
\begin{theorem}
\label{Bd:9}
Let $z$ be an irreducible 1-cocycle of $Z^1_{\mathrm{AB}}(\cA_{KM})$ with $d(z)=\tau(z)$. Then 
the charged component $z_{\mathrm{c}}$ is a multiplet of the type $z_{\mathrm{c}}  = \vec{\zeta}$,
where $\vec{\zeta} := \zeta \oplus \ldots \oplus \zeta$ is a $\tau(z)$-fold direct sum
of a topologically trivial 1-cocycle $\zeta$ such that $d(\zeta)=1$ and $\kappa(\zeta) = \kappa(z)$. 
Hence  
\[
z(\tilde a,a) \ = \ \Big\{ {\mathrm{Hol}}_{A^z}  \Join \vec{\zeta} \Big\} (\tilde a,a) \ ,
\qquad  \tilde a \gtrless a \ ,  
\]
where $A^z$ is the flat connection of the  $\bU(n)$-principal bundle $\cP^z$ defined by $z$. 
\end{theorem}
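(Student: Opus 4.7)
The plan is to combine Lemma \ref{Bd:6}, which already identifies $u_z$ with the holonomy $\mathrm{Hol}_{A^z}$ of a flat $\bU(n)$-connection, with the join decomposition (\ref{Bd:3}). Once I show that the charged component $z_{\mathrm{c}}$ has the form $\vec{\zeta}$ for an irreducible topologically trivial cocycle $\zeta$ of statistical dimension $1$ and $\kappa(\zeta) = \kappa(z)$, the final formula $z = \mathrm{Hol}_{A^z} \Join \vec{\zeta}$ is immediate from (\ref{Bd:3}) and Lemma \ref{Bd:6}, since $z_{\mathrm{c}} \in Z^1_{\mathrm{DHR}}(\cA_{KM})$ enters the join through the shift $\alpha_{\tilde{a}e}$ together with the holonomy. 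So everything reduces to a structural statement about $z_{\mathrm{c}}$.

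Next, I would extract the numerical constraints coming from the functoriality of $z \mapsto z_{\mathrm{c}}$ discussed in (\ref{eq.charge}). Since that functor preserves the tensor product and the permutation symmetry, it preserves the statistical dimension and the statistical phase. Thus $d(z_{\mathrm{c}}) = d(z)$ and $\kappa(z_{\mathrm{c}}) = \kappa(z)$. Under the hypothesis $d(z) = \tau(z) = n$, this forces $d(z_{\mathrm{c}}) = n$. Since $z_{\mathrm{c}}$ has finite statistics and is topologically trivial, it admits a decomposition into irreducible DHR summands $\zeta_1 \oplus \cdots \oplus \zeta_k$ with $\sum_j d(\zeta_j) = n$, and $\kappa(\zeta_j) = \kappa(z)$ for each $j$.

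The core step is then to show that all the $\zeta_j$ are mutually unitarily equivalent and each has statistical dimension $1$. The idea is to play the irreducibility of $z$ against the $I_n$-factor structure (\ref{Bd:5}) of $\cA(M,o)$. By Lemma \ref{Bd:6}, after conjugation by $U$, the topological part $u_z$ provides the full matrix block $\bI \otimes \bM_n(\bC)$ via the holonomy representation $\sigma$, which is irreducible of dimension $n = \tau(z)$. The join formula $z = \alpha_{\cdot\, e}(u_z)\cdot z_{\mathrm{c}}$ combined with the irreducibility of $z$ then means that $z_{\mathrm{c}}$ intertwines the $\pi_1(M)$-action furnished by $u_z$ on each irreducible summand; this is only compatible with $d(z_{\mathrm{c}}) = n = \tau(z)$ if every summand $\zeta_j$ appears with the same multiplicity and has $d(\zeta_j) = 1$. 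Saturation of the bound $\tau(z) \le d(z)$ (a theorem of \cite{BR08} recalled before Lemma \ref{Bd:6}) is precisely what rules out DHR-type non-abelian contributions. Setting $\zeta := \zeta_1$ then yields $z_{\mathrm{c}} = \vec{\zeta}$ with $d(\zeta) = 1$ and $\kappa(\zeta) = \kappa(z)$.

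Finally, I would verify independence from the choice of pole $e$ and path frame $P_e$, using the fact observed after (\ref{Bd:3}) that a different choice leads to equivalent charged and topological components; combined with the analogous freedom in the connection $A^z$ noted right after Lemma \ref{Bd:6}, this makes the statement $z = \mathrm{Hol}_{A^z} \Join \vec{\zeta}$ well-posed. I expect the main obstacle to be the core step above: translating the purely algebraic condition $d(z) = \tau(z)$ into the multiplicity structure of the DHR decomposition of $z_{\mathrm{c}}$, i.e.\ showing that the $\bU(n)$-irreducibility of the holonomy representation combined with the equality of statistical and topological dimension forces every irreducible DHR constituent of $z_{\mathrm{c}}$ to be abelian and to appear with multiplicity $n$.
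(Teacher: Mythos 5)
Your outline matches the paper's architecture: reduce to the structure of $z_{\mathrm{c}}$, use functoriality of (\ref{eq.charge}) to pin down $d(z_{\mathrm{c}})$ and $\kappa(z_{\mathrm{c}})$, decompose $z_{\mathrm{c}}$ into irreducibles, and then close with Lemma~\ref{Bd:6} and the join formula (\ref{Bd:3}). The problem is exactly where you say you expect it: the ``core step'' is not proved. Asserting that ``$z_{\mathrm{c}}$ intertwines the $\pi_1(M)$-action furnished by $u_z$ on each irreducible summand; this is only compatible with $d(z_{\mathrm{c}}) = n$ if every summand has dimension $1$'' is a heuristic, not an argument. You do not give any mechanism that converts irreducibility of the holonomy representation $\sigma$ into a constraint on the multiplicities $d(\zeta_j)$. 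In particular, nothing you write rules out, say, $z_{\mathrm{c}}$ being a single irreducible of statistical dimension $n$, which also satisfies $\sum_j d(\zeta_j)=n$.

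The paper's argument hinges on a specific structural fact you never invoke: the loop algebra $\cA(M,e) = \{z(p)\ :\ p:e\to e\}''$ embeds into the self-intertwiner space $(z,z)_e = \{t_e\ :\ t\in(z,z)\}$. Since $\cA(M,e)$ is a type $I_n$ factor with $n=\tau(z)$, its dimension is $\tau(z)^2 = d(z)^2$. On the other hand, writing $z_{\mathrm{c}}=\oplus_{k=1}^m\zeta_k$ with each $\zeta_k$ irreducible, one has $\dim (z,z)_e \leq m^2$, so the inclusion forces $m\geq d(z)$; together with the known bound $m\leq d(z)$ this gives $m=d(z)$, whence $d(\zeta_k)=1$ for all $k$ (because $\sum_k d(\zeta_k)=d(z)$), and saturation of $\dim(z,z)_e=m^2$ forces all $\zeta_k$ mutually equivalent. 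That inclusion plus the dimension count is the piece your proposal is missing; without it, the hypothesis $d(z)=\tau(z)$ simply sits there unused, and the multiplicity/dimension structure of $z_{\mathrm{c}}$ remains undetermined.
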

\begin{proof}
In general, since $\sum^m_{k=1} d(\zeta_k)= d(\tau)$, 
the charged component can be written as a direct sum of $m$ irreducible
and topologically trivial 1-cocycles $z_{\mathrm{c}}=\oplus^m_{k=1} \zeta_k$, with $m\leq d(z)$ \cite{BR08}.
We prove that $m=d(z)$. First of all observe that 
\[
\cA(M,e) \, \subseteq \, (z,z)_e \, := \, \{ t_e \ , \ t\in (z,z) \} \, ,
\] 
where the r.h.s. is a $\rC^*$-algebra formed by the component $e$ of the intertwiners of $(z,z)$.
If $m<d(z)$, then the dimension of $(z,z)_e$ would be smaller that $d(z)^2$ 
because any $\zeta_k$ is irreducible. This leads to a contradiction because of the above inclusion 
and because the dimension of $\cA(M,e)$ is $\tau(z)^2$ and $\tau(z)=d(z)$ by assumption. 
This implies that all $\zeta_k$ have statistical dimension $1$ and,
on account of the same dimensional argument just used, that all $\zeta_k$ are equivalent. 
Thus the proof follows by applying the previous theorem to \eqref{Bd:3}.
\end{proof}
\noindent We shall refer to $A^z$ as the \emph{background flat potential} associated with $z$. 

\medskip

Hence, an irreducible 1-cocycle $z$ with statistical dimension $d(z) = \tau(z)$ is the transporter of 
charge of Fermionic/Bosonic type in the presence of a background field $A^z$. 
If, in particular, $d(z)>1$ then we see that the charge $z_{\mathrm{c}}$ is a multiplet,
and the  background field acting upon this multiplet, realizing the AB effect, is non-Abelian.
The background field quantifies how the charge depends on the homotopy equivalence of paths when transported along paths. 
Precisely, referring to (\ref{eq.Brem}), if $p:o\to a$ and $q:o\to a$ then 
\[
z(p)\, \rho^z(o) \, z(\bar q) 
\ = \ 
{\mathrm{Hol}}_{A^z} (p*\bar q) \ \rho^z(a) \, ,
\]  
so the final representation $ \rho^z(o)$ differs from $\rho^z(a)$ by the factor
${\mathrm{Hol}}_{A^z} (p*\bar q)$ (see \eqref{Bd:3b-a} for the notation).

The next result proves that non-Abelian AB effects can appear only in spacetimes having 
a non-Abelian fundamental group.
\begin{corollary}
\label{Bd:9a}
Let $z \in Z^1_{\mathrm{AB}}(\cA_{KM})$ be irreducible  with topological dimension $\tau(z) = 1$.
Then ${\mathrm{Hol}}_{A^z}$ takes values in $\bU(1)$ and
\begin{equation}
\label{eq.Bd:9a}
z(\tilde a,a) \ = \ 
{\mathrm{hol}}_{A^z}(\ell_{(\tilde a,a)})  \cdot \, z_{\mathrm{c}}(\tilde a,a) 
\ , \qquad \forall \tilde a \gtrless a \, .
\end{equation}
When $\pi_1(M)$ is Abelian, this holds for any irreducible 1-cocycle.
\end{corollary}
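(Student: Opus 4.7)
The plan is to reduce both claims to Lemma~\ref{Bd:6} and the join formula (\ref{Bd:3}), exploiting the fact that when the topological component is one-dimensional it automatically lies in the center of~$\cA_o$.

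First, assume $\tau(z) = 1$. Then in the notation of Lemma~\ref{Bd:6} the principal bundle $\cP^z$ has structure group $\bU(1)$, so $\mathrm{hol}_{A^z}(\ell_{(\tilde a,a)}) \in \bU(1)$ is a phase. Moreover the factor $\cA(M,o)$ is of type $I_1$, hence equal to $\bC\bI$. Writing (\ref{lemBd:6}) with $n=1$, the unitary $U$ and the tensor with $\bI$ collapse and one gets simply
\[
u_z(\tilde a, a) \ = \ \mathrm{hol}_{A^z}(\ell_{(\tilde a,a)}) \cdot \bI \, \in \, \cU(\cA_o).
\]
This is the key point: the topological component of $z$ is a \emph{scalar} unitary. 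Hence in the join formula (\ref{Bd:3}) the inner automorphism $\alpha_{\tilde a e}(\,\cdot\,) = z_{\mathrm c}(p_{\tilde a e}) \,(\,\cdot\,)\, z_{\mathrm c}(p_{\tilde a e})^*$ acts trivially on $u_z(\tilde a, a)$, and we obtain
\[
z(\tilde a, a) \ = \ \alpha_{\tilde a e}(u_z(\tilde a, a)) \cdot z_{\mathrm c}(\tilde a, a) \ = \ \mathrm{hol}_{A^z}(\ell_{(\tilde a,a)}) \cdot z_{\mathrm c}(\tilde a, a),
\]
which is exactly (\ref{eq.Bd:9a}).

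For the second assertion, suppose $\pi_1(M)$ is Abelian and let $z$ be an arbitrary irreducible 1-cocycle in $Z^1_\rAB(\cA_{KM})$. Because $\cA(M,o)$ is an $I_n$-factor with $n=\tau(z)$, the representation $\si_z : \pi_1(M) \to \bU(n)$ constructed in (\ref{Bd:7}) is irreducible (the von Neumann algebra it generates is the factor $\cA(M,o)$ up to the unitary conjugation by $U$, and by Schur's lemma only irreducible representations generate factors). Since every irreducible unitary representation of an Abelian group is one-dimensional, this forces $\tau(z) = 1$, and the result reduces to the previous case.

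I do not expect a genuine obstacle: the whole content lies in recognizing that $\tau(z)=1$ makes $u_z$ central-valued so the join collapses to ordinary multiplication. The only point requiring mild care is the Schur-type argument in the Abelian case, i.e.\ checking that the irreducibility of $z$ as a 1-cocycle really forces $\si_z$ to be an irreducible representation of $\pi_1(M)$; this follows from the identification of $\cA(M,o)$ with the von Neumann algebra generated by $\si_z$ via the unitary $U$ of (\ref{Bd:5}), together with the fact that factors of type $I_n$ arise exactly from irreducible $n$-dimensional representations.
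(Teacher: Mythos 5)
Your proof is correct and takes essentially the same route as the paper: the paper's own argument observes that $\tau(z)=1$ forces the topological component to be scalar ($\bU(1)$)-valued so the join collapses to multiplication, and that for Abelian $\pi_1(M)$ every irreducible representation is one-dimensional. You simply spell out the scalar step via $\cA(M,o)\cong\bC\bI$ and the Schur-type argument (minor notational slip: it is the finite-dimensional $\sigma$, not $\sigma_z$, that takes values in $\bU(n)$), which the paper leaves implicit.
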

\begin{proof}
It is enough to observe that if $\tau(z)=1$  the topological component takes values in $\bU(1)$
and the joining reduces to the multiplication by the so-defined phase. 
Moreover, if $\pi_1(M)$ is Abelian, then any irreducible representation of $\pi_1(M)$ is a phase.
\end{proof}
%

%
It is known that a principal $\bU(1)$-bundle $\cP$ with a flat connection $A$
is trivial if, and only if, 
\begin{equation}
\label{eq.expA}
\mathrm{hol}_{A}([\ell]) \ = \ \exp \Big(i \oint_\ell A\Big) \ \ \ , \ \ \ \forall \ell \, . 
\end{equation}
That such a $\cP$ is trivial shall be proved in the following \S \ref{sec.DA1}.
The converse can be deduced from \cite[Cor.1.2]{CS85}: if $\cP$ is trivial, then its Chern class
$\delta_2(\cP) \in H^2(M)$ vanishes\footnote{
To fit the formalism of \cite[Cor.1.2]{CS85}, we used the symbol $\delta_2$ to denote the \emph{first} Chern class. 
For the same reason, $\mathrm{hol}_{A}$ should be regarded as a cohomology class in $H^1(M,\bR/\bZ)$ having identified $\bU(1)$ with $\bR/\bZ$. Finally, $H^1(M,\bR/\bZ)$ and $H^2(M)$ stand for the 
singular cohomology groups, whilst $H_1(M)$ is the first homology group.}.
Hence by the second exact sequence in \cite[Theorem 1.1]{CS85}, $\mathrm{hol}_{A}$ must be of the type (\ref{eq.expA}).
Finally, as $\cP$ is flat,  $\delta_2(\cP)$ is either zero or is a torsion class, that is, 
there is $k \in \bN$ such that $k \delta_2(\cP) = 0$ \cite[Sec.3]{Fre86}.
Thus applying Corollary \ref{Bd:9a} we conclude:
\begin{corollary}
\label{Bd:9a2}
Let $z \in Z^1_{\mathrm{AB}}(\cA_{KM})$ be irreducible with  $\tau(z) = 1$, and 
$\sigma : \pi_1(M) \to \bU(1)$ 
denote the morphism (\ref{Bd:7}). If $\delta_2(\cP^z) = 0$, then
\begin{equation}
\label{eq.Bd:9a2}
z(\tilde a,a) \ = \ \exp \Big(i\oint_{\ell_{(\tilde a,a)}}A^z\Big) \cdot \, z_{\mathrm{c}}(\tilde a,a) 
\ , \qquad  \tilde a \gtrless a \, .
\end{equation}
This always holds true when  $H^2(M)$ has no torsion classes.
\end{corollary}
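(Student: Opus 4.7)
The plan is to chain together Corollary \ref{Bd:9a} with the equivalence \eqref{eq.expA}, whose non-trivial direction (triviality of $\cP$ implies the exponential form of the holonomy) was already traced through the Chern--Cheeger--Simons exact sequences in the paragraph preceding the statement. Since $\tau(z)=1$, Corollary \ref{Bd:9a} already gives
\[
z(\tilde a,a) \ = \ \mathrm{hol}_{A^z}(\ell_{(\tilde a,a)}) \cdot z_{\mathrm{c}}(\tilde a,a) \ , \qquad \tilde a \gtrless a \, ,
\]
so the whole task reduces to showing that, under the hypothesis $\delta_2(\cP^z)=0$, the $\bU(1)$-holonomy is given by integration of $A^z$ along $\ell_{(\tilde a,a)}$.

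First I would invoke the second exact sequence of \cite[Theorem 1.1]{CS85} applied to the flat line bundle $\cP^z$: this places $\mathrm{hol}_{A^z}$, viewed as a class in $H^1(M,\bR/\bZ)\simeq\mathrm{Hom}(\pi_1(M),\bU(1))$, into a sequence whose connecting map sends the holonomy class to $\delta_2(\cP^z)\in H^2(M)$. The vanishing of $\delta_2(\cP^z)$ therefore places $\mathrm{hol}_{A^z}$ in the image of the preceding map, which is exactly the map sending a closed $1$-form $A$ on $M$ to the character $[\ell]\mapsto \exp\bigl(i\oint_\ell A\bigr)$. Since $A^z$ is a flat connection on a now-trivial $\bU(1)$-bundle, it can be represented as a global closed $1$-form (cf.\ the discussion around \eqref{eq.AbConnForm}), and the form appearing in the image must be $A^z$ itself up to exact terms (which do not contribute to the integral over closed curves). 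This yields \eqref{eq.Bd:9a2}.

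For the last assertion, I would use the remark already highlighted in the excerpt: flatness of $\cP^z$ forces $\delta_2(\cP^z)$ to be a torsion class in $H^2(M)$ (cf.\ \cite[Sec.3]{Fre86}). Hence when $H^2(M)$ contains no torsion classes, $\delta_2(\cP^z)=0$ automatically, so the hypothesis of the first part is satisfied for every irreducible $z$ with $\tau(z)=1$, giving \eqref{eq.Bd:9a2} unconditionally.

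The main obstacle, if there is one, is purely bookkeeping: one has to make sure that the identifications $H^1(M,\bR/\bZ)\simeq\mathrm{Hom}(\pi_1(M),\bU(1))$, the Hurewicz/universal-coefficient identification used implicitly in \cite{CS85}, and the passage from a locally constant transition cocycle to a global closed $1$-form are consistent with the conventions of \S \ref{Bd} and of Lemma \ref{Bd:6}. Once this dictionary is laid down, the argument is essentially a quotation of the exact sequence in \cite[Theorem 1.1]{CS85} combined with Corollary \ref{Bd:9a}; no further analytic or algebraic input is needed.
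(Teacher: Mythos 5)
Your proof is correct and follows essentially the same route as the paper: reduce via Corollary \ref{Bd:9a} to identifying $\mathrm{hol}_{A^z}$ with the exponential character, extract this from the second Cheeger–Simons exact sequence once $\delta_2(\cP^z)=0$, and observe that flatness forces $\delta_2(\cP^z)$ to be torsion so that the hypothesis is automatic when $H^2(M)$ is torsion-free. One minor imprecision worth noting: after the exact sequence you say the representing form is $A^z$ "up to exact terms," but in fact the ambiguity is by a closed $1$-form with integer periods (equivalently a gauge transformation $g^{-1}dg$), not merely an exact form; the conclusion $\exp i\oint A^z$ is unaffected, and the paper glosses over the same point, but the stated reason is slightly off. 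Also, the forms $A_o$ glue to a global closed $1$-form $A^z$ for any flat $\bU(1)$-bundle with locally constant transitions (as noted around \eqref{eq.AbConnForm}), not only for the "now-trivial" one — triviality is needed only to compare the holonomy with $\exp i\oint A^z$.
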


By well-known results, the above condition of $H^2(M)$ being torsion free is equivalent
to $H_1(M)$ being torsion free (see \cite[Sec.3]{Fre86} and references cited therein).
This covers the cases of physical interest, 
such as anti-de Sitter spaces ($H_1(M) = \bZ$), de Sitter spaces ($H_1(M) = 0$),
and the "Aharanov-Bohm spacetime" $M := (\bR^3 \setminus S) \times \bR$, 
where $S \sim \bR$ stands for the ideally infinite solenoid ($H_1(M) = \bZ$).

\section{On the observable net of the free Dirac field}
\label{C}

In the previous section we have shown that the nontrivial topology of the spacetime induces on the observable net 
superselection sectors affected by the topology of the spacetime. These sectors factorize in a part which describes a quantum charge (DHR charges) and in a topological part ruling the behaviour of the quantum charge when moved along a path in the same way as it occurs in the Aharonov-Bohm effect.  

Our final aim is to show that this is not merely an analogy but the real physical Aharonov-Bohm effect: we want to show that the quantization of the free Dirac field in the presence of a  background flat potential gives rise to the same type of superselection sectors as those described in the abstract analysis of the previous section.

As a preliminary step,  in this section we analyze the free Dirac field
in a 4-d globally hyperbolic spacetime \emph{without} the presence of a background flat potential. We 
show that the net of local observables  in the representation defined by a pure quasi-free Hadamard state satisfies all the properties assumed in section \ref{Ba},  in particular Haag duality and (a strengthening)  
of the Borchers property. Then we provide an explicit construction 
of the superselection sectors of the Dirac field and we show, as expected, that these are DHR cocycles obeying the Fermi statistics.

\subsection{The Dirac operator}

It is convenient to recall some notions on spin structures. The existence of any such structure is guaranteed if and only the second Stiefel-Whitney class of the underlying manifold $w_2(M)\in H^2(M,\bZ_2)$ vanishes. This is always the case if we consider four dimensional globally hyperbolic spacetimes and, in addition the corresponding Dirac bundle is trivial, namely $DM\simeq M\times\bC^4$, {\it cf.} \cite{Ish78} and references therein. We introduce also
the space $\cE(DM)$ of smooth sections of $DM$,
the space $\cS(DM)$ of \emph{compactly supported} smooth sections of $DM$ (\emph{spinors}),
and the subspaces $\cS_o(DM)$ of those sections $f\in \cS(DM)$ whose support is contained in a diamond $o$,
$\mathrm{supp}(f)\subset o$. 
%
%
%
Similar notions hold for the dual Dirac bundle $D^*M\cong M\times (\mathbb{C}^4)^*$.   
We denote the space of smooth sections by $\cE(D^*M)$, that of compactly supported smooth sections (co-spinors) by  $\cS(D^*M)$ and the subset of smooth sections of $\cS(D^*M)$
whose support is contained in a diamond $o$ denoted by $\cS^*_o(DM)$.

On top of these structures we define the Dirac operator $\mathcal{D}:= i\gamma^\mu\nabla_\mu-m$ acting on spinors, where $m>0$ is the mass parameter, the $\gamma$-matrices are fiberwise linear maps of $DM$ obeying $\{\gamma_a,\gamma_b\}=2g_{ab}$, while $\nabla$ is the spin-connection. Similarly we define the dual Dirac operator $\mathcal{D}_*=-i\gamma^\mu\nabla_\mu-m$ acting on co-spinors. Here duality refers to the metric induced pairing 
$\langle,\rangle:\cE(D^*M)\times \cS(DM)\to {\mathbb{C}}$ 
\[
\langle h,f\rangle :=\int\limits_M  h\cdot f  \, d\mu_g  , \qquad 
h\in \cE(D^*M), \ f\in\cS(DM) \ . 
\]
Since the Dirac operator is Green hyperbolic it possesses unique advanced and retarded fundamental solutions
$S^\pm:\cS(DM)\to \cE(DM)$,  such that $\mathcal{D}S^\pm=S^\pm\mathcal{D}=id$ on $\cS(DM)$ while $\textrm{supp}(S^\pm(f))\subseteq J^\pm(\textrm{supp}(f))$ for all $f\in \cS(DM)$. The same conclusion can be drawn for $\mathcal{D}_*$ and it turns out that the associated advanced and retarded fundamental solutions $S^{\pm}_*:\cS(D^*M)\to \cE(D^*M)$ obey the structurally defining property 
$\langle h, S^\pm(f)\rangle=\langle S^{\mp}_*(h), f\rangle$ for any $h\in\cS(D^*M)$ and $f\in\cS(DM)$. Finally the corresponding 
propagators  $S:=S^+- S^-$ and $S_*:=S^{+}_*- S^{-}_*$ enjoy 
\[
\langle S_*(h),f\rangle=-\langle h,S(f)\rangle \ , \qquad h\in \cS(D^*M), \ f\in\cS(DM) \ .
\] 
The vector spaces whose elements are spinors and co-spinors are isomorphic via the Dirac adjoint  defined in terms of the gamma matrices  as
\[
\cE(DM)\ni f\mapsto f^\dagger:= f^*\, \gamma_0\in \cE(D^*M)  \ \ \ ; \  \ \ \cE(D^*M)\ni h\mapsto h^\dagger:= \gamma^{-1}_0 h^*\in 
\cE(DM) \ .
\]
Here we are taking the gamma matrices in the standard representation: $\gamma^*_0=\gamma_0$ and $\gamma^*_k=-\gamma_k$  for 
$k=1,2,3$.

\subsection{The (CAR) algebra of the Dirac field}
\label{Ca}

There is a fairly vast literature on the $\textrm{CAR}$ algebra and the Dirac quantum field in a 4-dimensional globally hyperbolic spacetime. 
In the present section we give a brief description of these topics, and refer the reader to the references \cite{Dim82,San10,DHP09} for details. 

\medskip

A convenient approach to quantize the Dirac field and to study the corresponding representations is the self-dual approach \cite{Ara71}.  We consider the following Whitney sum of bundles 
\[
\widetilde{D}M:= DM\oplus D^*M \ , 
\]
as well as the space of compactly supported smooth sections $\cS(\widetilde{D}M)\cong \cS(DM)\oplus\cS(D^*M)$.  
In addition we extend the Dirac operators and the corresponding propagators on this space by setting 
\[
\widetilde D (f\oplus h):= Df\oplus D_*h \ , \qquad f\oplus h \in \cS(\widetilde{D}M).
\]
Observe that $\cS(\widetilde{D}M)$ is equipped with a positive semi-definite sesquilinear form 
\begin{equation}
\label{scalar}
(f_1\oplus h_1, f_2\oplus h_2):=  (f_1,f_2)_s+ (h_1,h_2)_c \ ,  
\end{equation}
with  
\begin{equation}
\label{Scalars}
(f_1,f_2)_s:= i\int_M f^\dagger_1 S(f_2)   \ \ ; \ \ 
(h_1,h_2)_c:=  -i\int_M S_*(h_2)  h^\dagger_1 \ , 
\end{equation}
for any  $f_i\oplus h_i\in\cS(\widetilde{D}M)$,  $i=1,2$ (both $(\cdot,\cdot)_s$ and $(\cdot,\cdot)_c$ 
are positive semi-definite sesquilinear forms on $\cS(DM)$ and $\cS(D^*M)$ respectively). 
This form \eqref{scalar} annihilates on the image of the extended Dirac operator 
$\widetilde D$. Taking the quotient we get, up to a suitable closure, a Hilbert space 
(${\bs h}, (,))$. Finally setting
\[
\Gamma(f\oplus h):= h^\dagger  \oplus f^\dagger \ , \qquad f\oplus h\in S(\widetilde{D}M)
\]
one observes that $\Gamma$ extends to an anti-unitary operators on $\bs h$:
\[
(\Gamma [f_1\oplus h_1], \Gamma [f_2\oplus h_2])=([f_2\oplus h_2],[f_1\oplus h_1]) \ , \qquad  
[f_1\oplus h_1],[f_2\oplus h_2]\in\bs h. 
\]
%
We now are ready to give the definition of the CAR algebra. 
\begin{definition}
\label{CAR}
The \textbf{algebra of the Dirac field} is the $\mathrm{C}^*$-algebra $\cC(M)$ generated by the unit $\mathbbm{1}$
and $\Psi([f\oplus h])$ as $[f\oplus h]$ varies on the Hilbert space ${\bs h}$ and satisfying 
\begin{itemize}
 \item[(i)] The map $\bs h\ni [f\oplus h]\mapsto \Psi([f\oplus h]) \in \cC(M)$ is linear;
 \item[(ii)] Self-duality: $\Psi( \Gamma([f\oplus h])) =\Psi([f\oplus h])^*$ for any $[f\oplus h]\in \bs h$;
 \item[(iii)] Canonical Anticommutation Relations (CARs): 
 \[
 \{\Psi([f_1\oplus h_1]),\Psi([f_1\oplus h_2])\} = 
 \left( \Gamma [f_1\oplus h_1],[f_2\oplus h_2]\right)
 \]
 for any $ [f_1\oplus h_1],[f_2\oplus h_2]\in \bs h$.
\end{itemize}
\end{definition}
\noindent We draw on the consequences of this definition.\smallskip  

$(1)$ The generators of the CAR algebra encode \emph{the Dirac field $\psi$ and the conjugated Dirac field $\bar{\psi}$}:
\[
\psi (f):= \Psi[f\oplus 0] \, , \ \ f\in\cS(DM) \ \ \ \ ; \ \ \ \  
\bar{\psi}(h):= \Psi[0\oplus h] \,,  \ \ h\in\cS(D^*M) \ . 
\]
Notice in fact that, since $\Psi\circ \widetilde D=0$, the fields  $\psi$ and $\bar{\psi}$ satisfy the  \emph{Dirac equation} 
\[
\psi\circ (i\slashed \nabla-m)=0 \ \ , \ \  \bar{\psi}\circ (i\slashed \nabla+m) =0 \ . 
\]
Furthermore we have  
\begin{equation}
\label{dircon}
\bar{\psi}(h)= \psi(h^\dagger)^* \ , \qquad h\in \cS(D^*M) \ , 
\end{equation}
and thus  $\Psi[f\oplus h]= \psi(f)+\psi(h^\dagger)^*= \bar{\psi}(f^\dagger)^*+\bar{\psi}(h)$. This implies in particular that the algebra $\cC(M)$ can be generated by taking only either the Dirac field or the conjugated one. \smallskip

(2) The Dirac field satisfies the usual   \emph{Canonical Anticommutation Relations}:
\begin{equation}
\label{CARs}
\{ \psi(f_1)^* , {\psi}(f_2) \} \ = \ (f_1,f_2)_s 
\ \ \ , \ \ \  
\{ {\psi}(f_1) , {\psi}(f_2)\} \ = \ 0 \ .
\end{equation}
From these and \eqref{dircon} it follows that $\{ \bar{\psi}(h) , {\psi}(f) \} = (h^\dagger,f)_s$ and 
$\{\bar{\psi}(h_1),\bar{\psi}(h_2)\}=0$. \smallskip 

(3) The algebra $\cC(M)$ is acted upon by $\bU(1)$ as a \emph{gauge group}. The mapping 
$[f\oplus h] \mapsto [\zeta f\oplus \bar\zeta h ]$, which leaves invariant the scalar product and commutes 
with the conjugation $\Gamma$ for any $\zeta\in\bU(1)$, induces 
an action $\bU(1)\ni\zeta \mapsto \eta_\zeta \in\mathrm{Aut}(\cC(M))$ where 
\begin{equation}
\label{gauge0}
\eta_\zeta( \Psi([f\oplus h]))= \Psi(\zeta f\oplus \bar{\zeta} h) \ .
\end{equation}
In particular $\eta_\zeta(\psi(f)= \zeta \psi(f)$ and 
$\eta_\zeta(\bar{\psi}(h)= \bar{\zeta} \, \bar{\psi}(h)$. We denote by $\eta_-$ the automorphism induced by 
$\zeta=-1$. \smallskip 

(4) Finally,  we note that 
$\psi$ is a $\mathrm{C}^*$-valued distribution when  $\cS(DM)$ is equipped with the topology of test-functions (the same holds true for $\bar{\psi}$). In fact by the CARs \eqref{CARs} we have $\psi(f)^*\psi(f)= - \psi(f)\psi(f)^* +  (f,f)_s\mathbbm{1}$. So  
$\psi(f)^*\psi(f)\psi(f)^*\psi(f)= \psi(f)^*\psi(f) \, (f,f)_s\mathbbm{1}$ 
because $\psi(f)\psi(f)=0$. Using the $\mathrm{C}^*$ property of the norm
we get 
\begin{multline*}
\|\psi(f)^*\psi(f)\|^2=\|\psi(f)^*\psi(f)\psi(f)^*\psi(f)\| = \|\psi(f)^*\psi(f)\| (f,f)_s  \\ 
  \iff (f,f)_s =\|\psi(f)^*\psi(f)\|= \|\psi(f)\|^2 \ ,
\end{multline*}
and the proof follows by the continuity of $(\cdot,\cdot)_s$ with respect to the test function topology.

\subsection{Quasi free Hadamard states}

We describe the net of the Dirac field in the representation associated with a pure and gauge invariant quasi-free Hadamard state. We show that the net satisfies twisted Haag duality and the local algebras are type $\mathrm{III}$ factors. As a consequence the resulting net of local observables in the vacuum (zero charge) representation verifies all the properties assumed in Section \ref{Ba}.\smallskip 
 
To begin with we consider a \emph{pure quasi-free state} $\omega$ 
of the algebra $\cC(M)$ and the corresponding  GNS triple $(\cH,\pi,\Omega)$ (see \cite{Ara71}). Here $\cH$ is the antisymmetric Fock space associated with 
a closed subspace of $P\bs h$  defined by an orthogonal projection $P$, \emph{the base projection} enjoying 
the relation $\Gamma P=(1-P)\Gamma$; $\pi$ is an irreducible  representation of the CAR algebra on the 
Hilbert space $\cH$ and  $\Omega$ is a unit norm vector $\cH$ 
such that $\omega(\cdot)=(\Omega, \pi(\cdot) \Omega)$.  To ease notation we keep symbols 
$\Psi$, $\psi$ and $\bar{\psi}$ to denote the generator of the CAR algebra, the Dirac field and its conjugate, respectively, in the Fock representation. Finally, if $\omega$ is also 
\emph{gauge invariant} $\omega\circ\eta = \omega$ then there exists a unitary representation $U: \bU(1) \to \cU(\cH)$ such that for any $\zeta\in \bU(1)$, it holds $U(\zeta) \Omega = \Omega$
\begin{equation}
\label{gauge}
U(\zeta) \Psi([f\oplus h]) \, U(\zeta)^*  = \Psi([\zeta f\oplus \bar{\zeta}h]).
\end{equation}
In addition  $U(\zeta) \psi(f) \, U(\zeta)^*= \zeta \psi(f)$ and $U(\zeta) \bar\psi(h) \, U(\zeta)^*= \bar{\zeta} \, \bar{\psi}(h)$.\smallskip


%
%
%
%

So given a pure and gauge invariant quasi-free state of the CAR algebra and   $\cH,\pi,\omega, U$ as above,  
the \emph{net of local Dirac fields} is the correspondence $\cF: KM\ni o \to \cF_o\subseteq \cB(\cH)$ 
associating the von Neumann algebra 
\[
\cF_o := \{\Psi([f\oplus h]) \ : \  [f\oplus h]\in\widetilde{\cS}_o(DM)\}^{\prime\prime}\  ,  
\]
to any diamond $o$. The net $\cF_{KM}$ satisfies isotony, $\cF_o \subseteq \cF_{\tilde o}$ for any 
$o \subseteq \tilde o$, but \emph{not} causality 
because of the CARs. This is replaced by \emph{twisted causality}  
\[
 \cF^t_o\subseteq \cF^\prime_a \ , \qquad a\perp o \ . 
\]
where $\cF^t_o$ is the von Neumann algebra defined as $\cF_o^t:=\tilde U_- \cF_o \tilde U^*_-$  in terms  
of  the \emph{twist operator} $\tilde{U}_-$:
$$\tilde{U}_-  := \frac{1}{1+i} \, (1+iU_-)\quad\textrm{such that}\quad\tilde{U}_- \Psi(\cdot) \tilde{U}_-^*  =  iU_-\Psi(\cdot).$$ 
\begin{theorem}
\label{ThtHaag}
The net of the Dirac field $\cF_{KM}$ in a Fock representation induced by a pure and gauge invariant quasi-free state satisfies \textbf{twisted Haag duality} 
\begin{equation}
\label{tHaag}
 \cF^t_o = \cap \{\cF^\prime_a \ : \  a\perp o\} \ , \qquad o\in KM \ . 
\end{equation}
\end{theorem}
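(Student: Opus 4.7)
The plan is to follow the strategy of \cite{BJL02}, which reduces twisted Haag duality for a free CAR net in a pure quasi-free Fock representation to a one-particle Haag duality on the base subspace. Concretely, I would associate with each diamond $o \in KM$ the closed subspace
\[
\cK_o \ := \ \overline{P\, \widetilde{\cS}_o(DM)} \ \subseteq \ P\bs h ,
\]
where $P$ is the base projection defining the quasi-free state $\omega$ and $\widetilde{\cS}_o(DM)$ is the space of compactly supported sections of $\widetilde D M$ with support in $o$. The standard quasi-free CAR functor identifies $\cF_o$ with the von Neumann algebra generated by $\{\Psi(k) : k \in \cK_o\}$ on the antisymmetric Fock space $\cH$, and in this framework twisted Haag duality for $\{\cF_o\}$ is equivalent to a one-particle duality for the family $\{\cK_o\}$ with respect to the symplectic structure induced by $\Gamma$ and $(\cdot,\cdot)$.

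The first step is twisted locality: if $a\perp o$ and $k_i \in \cK_{o_i}$ with $o_1=o$, $o_2=a$, then the form \eqref{scalar} vanishes on $k_1,k_2$ by the support properties of the propagators $S,S_*$ entering \eqref{Scalars}; combined with the Klein twist $\tilde U_-$, this gives the easy inclusion $\cF_o^t \subseteq \bigcap_{a\perp o}\cF'_a$. The crucial step is the converse (duality) inclusion: any $\xi\in P\bs h$ that is $\Gamma$-twisted-orthogonal to every $\cK_a$ with $a\perp o$ must lie in $\cK_o$. I would handle this by passing to Cauchy data on a Cauchy surface $\Sigma$: by well-posedness of the Cauchy problem for $\widetilde D$, elements of $\bs h$ are represented by compactly supported data on $\Sigma$, and $\widetilde{\cS}_a(DM)$ modulo $\widetilde D \cdot \widetilde{\cS}(DM)$ corresponds to data supported in $J(a)\cap\Sigma$. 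A partition of unity on $\Sigma$ subordinate to a cover of $\Sigma\setminus(\overline{o}\cap\Sigma)$ by sets $a\cap\Sigma$, $a\perp o$, then yields the desired decomposition of $\xi$, reducing the duality statement to the one for compactly supported smooth Cauchy data.

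The main obstacle is that the base projection $P$ is in general \emph{nonlocal}, so a Cauchy-surface partition of unity does not automatically preserve the range of $P$ when one identifies $\cK_o$ with $P$-projected Cauchy data. This is precisely where the Hadamard property of $\omega$ enters: it controls the wavefront set of the two-point function, hence of $P$, so that $P$ differs from a properly supported pseudodifferential operator by a smoothing term whose contribution to the decomposition can be absorbed by a density argument. This is the technical core of \cite{BJL02}, and once the one-particle duality is established, the CAR Fock-space functor promotes it to the twisted Haag duality \eqref{tHaag}.
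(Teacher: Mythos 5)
There is a genuine gap: you set up the problem with the wrong one--particle subspaces. You take $\cK_o := \overline{P\,\widetilde{\cS}_o(DM)} \subseteq P\bs h$, but the abstract twisted duality of \cite{BJL02} applies to closed subspaces of the \emph{full} one--particle space $\bs h$ that are \emph{invariant under the conjugation} $\Gamma$, and $\cK_o$ is not $\Gamma$--invariant since $\Gamma P = (1-P)\Gamma$. The correct objects are the unprojected $\Gamma$--invariant subspaces $\bs h_o := \overline{\widetilde{\cS}_o(DM)} \subseteq \bs h$; for these, \cite{BJL02} gives $\cF^t_o = \cF\big((\bs h_o)^c\big)'$ directly, with $(\bs h_o)^c$ the orthogonal complement \emph{in $\bs h$}, no matter what the base projection $P$ is. The only thing left to prove is then a plain density statement, that $\bigcup\{\widetilde{\cS}_a(DM): a\perp o\}$ is dense in $(\bs h_o)^c$, which is handled by passing to Cauchy data on a Cauchy surface $\Sigma$ (where $\bs h \simeq L^2(\Sigma,\bC^4)\oplus L^2(\Sigma,(\bC^4)^*)$) and using a partition--of--unity argument as in \cite{HdA06} for a diamond $o = D(G)$ with $\overline{G}\subsetneq\Sigma$. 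That part of your sketch is in the right spirit, but it should be carried out in $\bs h$, not in $P\bs h$.

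Because you projected by $P$, you created a spurious difficulty (the nonlocality of $P$), and you then proposed to overcome it by invoking the Hadamard property to control the wavefront set of $P$. This is not what \cite{BJL02} does, and it would prove a weaker theorem than stated: Theorem \ref{ThtHaag} only assumes a pure, gauge--invariant quasi--free state, and the paper's proof is valid at that generality. The Hadamard condition enters the paper only later, to guarantee that the local algebras are type $\mathrm{III}$ factors; it is not needed for twisted Haag duality. A further small defect: your twisted--locality step asserts that the form \eqref{scalar} vanishes on $k_1, k_2 \in \cK_{o_1}, \cK_{o_2}$ for $o_1\perp o_2$, but the support properties of $S, S_*$ give vanishing of $(\Gamma\xi_1,\xi_2)$ for the \emph{unprojected} $\xi_i$ with causally disjoint supports; the $P$--projected vectors do not in general inherit any support localization, so the argument as phrased fails. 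Working throughout with $\bs h_o$ instead of $\cK_o$ repairs all of these issues.
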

\begin{proof}
The proof relies on abstract twisted duality and on a density result.  We recall that 
for the CAR algebra in a Fock representation, abstract twisted duality holds  
for any closed subspace which is invariant under the conjugation defining the abstract CAR algebra \cite{BJL02}.  
Observing that the closure $\bs h_o$ of  $\widetilde \cS_o(DM)$ in $\bs h$ 
is $\Gamma$ invariant for any $o$, abstract twisted duality in our case reads  as 
\[
\cF^t_o= \cF\big((\bs h_o)^c)^\prime \ , 
\]
where $(\bs h_o)^c$ is the orthogonal complement of $\bs h_o$. So, it is enough to prove 
$\cup \{\widetilde{\cS}_{a}(DM) \,:\, a\subset o^\perp\}$ is dense in $(\bs h_o)^c$  where $o^\perp$  is  the causal complement 
of the diamond $o$.\smallskip  

We now need two observations. First,  any diamond $o$ is of the form $D(G)$  with 
$G\subset \Sigma$ where $\Sigma$ is a spacelike Cauchy surface and $G$ is an  open relatively compact 
subset of $G$ diffeomorphic to an open 3-ball with $cl(G)\subsetneq \Sigma$ . So, it is easily seen that 
$o^\perp= I(\Sigma\setminus cl(G))$ where $I$ stands for the chronological set. Secondly, 
the Hilbert space $\bs h$ is isomorphic to the Hilbert space 
$L^2(\Sigma,\mathbb{C}^4)\oplus L^2(\Sigma,(\mathbb{C}^{4})^*)$ 
with the scalar product $(u_1\oplus v_1,u_2\oplus v_2):=
\int_\Sigma  (u^\dagger_1 \slashed n u_2 + v_2\slashed n v^\dagger_1)\, d\Sigma$ where $n^\mu$ is the timelike future pointing normal vector to the Cauchy surface. In particular
the isomorphism relies on the following identity
\[
([f_1\oplus h_1],[f_2\oplus h_2])= \int_\Sigma \left((Sf_1)^\dagger\slashed n (Sf_2) + (S_*h_2)\slashed n (S_*h_1)^\dagger\right) 
d\Sigma   \ . 
\]  
On these grounds it is enough to prove that the space 
$C^\infty_0(G,\mathbb{C}^4)\oplus C^\infty_0(\Sigma\setminus cl(G),\mathbb{C}^4)$ is dense in  $L^2(\Sigma,\mathbb{C}^4)$ and similarly for the dual. This follows from a partition of unity argument, similarly to \cite[Prop. III.1]{HdA06} for Majorana fields, because $G$ is a relatively compact open subset of $\Sigma$ with  a smooth boundary. 
\end{proof}
A class of quasi-free states for quantum fields on curved 
spacetimes  having several physically meaningful applications is that of Hadamard states \cite{KW91} (see also \cite{FV13}). The existence of pure quasi-free and gauge invariant Hadamard states 
has been shown in \cite{FL15}. The local algebras of the Dirac field 
in a representation defined by such a state are type $\mathrm{III}$ factors \cite{HdA06} \footnote{Actually the authors give a stronger characterization of the local algebras showing that these  
are $\mathrm{III}_1$ factors.}. This property is stronger 
than the Borchers property since for any $o\in KM$ any projection $E\in\cF_o$ is equivalent in $\cF_o$ to the identity i.e.\, 
there exists  $V\in\cF_o$ s.t. $V^*V=E$ and   $VV^*=1$ ($V$ is an isometry).
\begin{definition}
We take as a 
\textbf{reference representation} for the Dirac field, the representation 
defined by a \textbf{pure quasi-free and gauge invariant Hadamard state}. 
\end{definition} 
As shown above the net $\cF_{KM}$ is 
isotonous, satisfies twisted Haag duality and the local algebras $\cF_o$ are type $\mathrm{III}$ factors.

We define the observable net. To this end 
we consider the \emph{spectral subspaces}
\begin{equation}
\label{eq.Ca5}
\cF_o^n \, := \, 
\{ T \in \cF_o \, : \, U(\zeta) T U(\zeta)^* \, = \, \zeta^n T \, , \, \forall \zeta \in \bU(1) \}^{\prime\prime}
\ \ \ , \ \ \ 
n \in \bZ 
\, . 
\end{equation}
For the grade $n=0$ any $\cF_o^0$ is a von Neumann algebra which is \emph{gauge invariant} and such that  $[ T,S ] = 0$ for $T \in \cF^0_o$ and $S \in \cF^0_a$ with $a \perp o$. This yields the \emph{net of local observables}  $\cF^0_{KM}:KM\ni o\to \cF_o^0\subseteq\cB(\cH)$ which satisfies isotony and causality. The local observable algebras $\cF_o^0$ inherit from $\cF_o$ the properties of being type $\mathrm{III}$ factors, 
see for instance \cite{HdA06}.\smallskip 


Finally we have to take into the game the zero-charge "vacuum" representation of the net of local observables.
The \emph{vacuum Hilbert space} $\cH_0$ is the subspace
\begin{equation}
\label{eq.Ca6}
\cH_0 \ := \   {\mathrm{closure}} \, \{ A \Omega \, : \, A \in \cF_o^0 \, , \, o \in KM \} \subseteq \cH \, ,
\end{equation}
and the \emph{vacuum representation} $\pi_0:\cF^0_{KM}\to \cB(\cH_0)$ is defined as 
\begin{equation}
\label{vacrep}
\pi_0(A) := AE_0 \ , \qquad   A \in \cF_o^0, \  o \in KM \ ,  
\end{equation} 
defined in terms of the projection $E_0$ onto $\cH_0$. Since $E_0$ is an element of $(\cF^0_o)'$ for any diamond $o$ and since the local algebras $\cF^0_o$ are type $\mathrm{III}$ factors, $\pi_0$ is faithful
and the von Neumann algebras 
\begin{equation}
\label{eq.Ca6a}
\cA_o \, := \, \pi_0(\cF_o^0) \, \subseteq \ \cB(\cH_0) \, , \ \qquad o \in KM \, .
\end{equation}
turn out to be type $\mathrm{III}$ factors. We call the correspondence $\cA_{KM}: KM\ni o\to \cA_o \subseteq \cB(\cH_0)$ \emph{the net of local observables in the vacuum representation}. 
So, $\cA_{KM}$ is an irreducible net satisfying isotony, causality and whose local algebras $\cA_o$ are type $\mathrm{III}$ factors. Furthermore, it is a well known fact \cite{DHR69} that as the net of fields $\cF_{KM}$ satisfies twisted Haag duality, the net $\cA_{KM}$ satisfies  \emph{Haag duality}.

\subsection{Charge transporters of the free Dirac Field}
\label{Cc}
 
Having established that the observable net  $\cA_{KM}$ of the Dirac field in the vacuum representation 
satisfies the same properties as those assumed in Section \ref{Ba}, 
now we construct the charge transporters of the Dirac field, which provide the DHR-sectors for $\cA_{KM}$.\medskip

We start by taking for any $o \in KM$ a normalized spinor
$f_o$ of $\cS_o(DM)$, that is  $(f_o,f_o)_s =1$. With this choice 
any $\psi(f_o)$ turns out to be a partial isometry of $\cF_o$ since, as it is easily verified by \eqref{CARs},  
the operators $\psi(f_o) \psi (f_o)^*$ and  $\psi(f_o)^* \psi (f_o)$ are projections. 
Notice, in particular, that $\psi(f_o)$ is a partial isometry of the spectral subspace $\cF^1_o$, see \eqref{eq.Ca5}.
We can now make $\psi(f_o)$ equivalent to a unitary of the same spectral subspace. To this end 
we note that both the projections $\psi(f_o) \psi (f_o)^*$ and  $\psi(f_o)^* \psi (f_o)$, since gauge invariant \eqref{gauge}, belong to the algebra $\cF^0_o$ 
which is  a type  $\mathrm{III}$ factor. Hence there exist $W_o,V_o\in\cF_o^0$ s.t.    
\[
{V_o}^*V_o=1 \ \ , \ \ V_oV_o^* = \psi(f_o) \psi (f_o)^* \ \ , \ \  
W_o^*W_o=1 \ \ , \ \ W_oW_o^* =  \psi(f_o)^* \psi (f_o) \ . 
\]
%
%
%
%
%
%
The operator 
\begin{equation}
\label{eq.CaLoc}
\varphi_o  :=  W^*_o\psi(f_o)V_o \ , \qquad o\in KM
\end{equation}
is a unitary in $\cF_o$ and an element of $\cF_o^1$ because  by construction
$\eta_\zeta(\varphi_o) = \zeta \varphi_o$, $\zeta \in \bU(1)$. 
The operators $\varphi_o$ do not fulfill the CARs \eqref{CARs}. 
Nevertheless, if $a\perp o$, $A\in \cF^0_o$ and $F\in\cF_a$ then $[A,F]=0$, so that 
\begin{align*}
\varphi_o^*\,\varphi_a & = V^*_o\psi(f_o)^*W_o\, W^*_a\psi(f_a)V_a= V^*_o\psi(f_o)^*\,W^*_a\,W_o \psi(f_a)V_a \\
& = V^*_oW^*_a \psi(f_o)^* \psi(f_a) W_oV_a= - V^*_oW^*_a   \psi(f_a) \psi(f_o)^*  W_oV_a \\
%
%
& = - \varphi_a\,\varphi_o^*
\end{align*}
and similarly for the other CARs. In conclusion, for all $o \perp a$
\begin{equation}
\label{eq.Ca9a}
\varphi_o^* \, \varphi_a \ = \ - \varphi_a \, \varphi_o^* 
\ \ \ , \ \ \ 
\varphi_o \, \varphi_a \ = \ \varphi_a \, \varphi_o
\ \ \ , \ \ \ 
\varphi_o \in (\cF_a^0)^\prime 
\, .
\end{equation}
Now we define the charge transporter $z$ associated with the Dirac field as
\begin{equation}
\label{eq.Ca90}
 z(a,o) \ := \ \pi_0(\varphi_a^*\varphi_o) \ = \ E_0 \varphi_a^*\varphi_o \ ,\qquad o\subseteq a \ ,
\end{equation}
where $\pi_0$ is the vacuum representation.  Clearly $z$ is localized and 
$z(a,o)\in\cA_a$ for any  $o\subseteq a$,  satisfies the 
1-cocycle identity 
\[
z(\tilde o,a)\, z(a,o) \ = \ 
\pi_0(\varphi_{\tilde o}^* \varphi_a \varphi_a^*\varphi_o) \ = \
%
%
z(\tilde o,a) \ , \qquad a \subseteq o \subseteq \tilde o \ , 
\]
and 
\begin{equation}
\label{eq.Ca9}
z(p) =  
z(o,o_{n})\cdots z(o_2,o_1)z(o_1,a)  = \pi_0(  
\varphi_o^*\varphi_{o_{n}} \cdots \varphi_{o_1}^*\varphi_a) \, = \, 
\pi_0(\varphi_o^* \varphi_a)
\end{equation}
for any path  $p : a \to o$, and for $o=a$ we find $z(p) = 1$.
Thus $z$ is topologically trivial, $z \in Z^1_t(\cA_{KM})$.
\begin{lemma}
\label{fermistat}
The following assertions hold true:
\begin{itemize}
\item[(i)] the definition of the 1-cocyle $z$ is independent from the choice of the normalized spinors;
\item[(ii)] $z \in Z^1_\rDHR(\cA_{KM})$ and obeys the Fermi statistics i.e. it has statistical dimension $d(z)=1$ and  statistical phase $\kappa(z) = -1$.
\end{itemize}
\end{lemma}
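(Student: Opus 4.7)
The plan is to handle (i) by constructing an explicit unitary intertwiner in $Z^1(\cA_{KM})$ between the cocycles arising from two different choices of normalized spinors, and (ii) by combining the telescoping identity \eqref{eq.Ca9} (giving topological triviality) with a direct computation of the statistics operator via the anticommutation relations \eqref{eq.Ca9a}, which will yield both $d(z)=1$ and $\kappa(z)=-1$.

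For (i), I would take a second family $\{f'_o\}$ of normalized spinors, producing unitaries $\varphi'_o \in \cF_o^1$, and set $S_a := (\varphi'_a)^*\varphi_a \in \cF_a^0$. This element has total gauge charge zero, hence is gauge invariant, and is a product of two unitaries, hence itself unitary. Since $\pi_0$ is faithful on the gauge-invariant type $\mathrm{III}$ factor $\cF_a^0$, the operator $t_a := \pi_0(S_a)$ is a unitary in $\cA_a$. A one-line telescoping calculation,
\begin{equation*}
t_a\, z(a,o) \;=\; \pi_0\bigl((\varphi'_a)^*\varphi_a\varphi_a^*\varphi_o\bigr) \;=\; \pi_0\bigl((\varphi'_a)^*\varphi_o\bigr) \;=\; \pi_0\bigl((\varphi'_a)^*\varphi'_o(\varphi'_o)^*\varphi_o\bigr) \;=\; z'(a,o)\, t_o ,
\end{equation*}
shows $t\in(z,z')$, establishing unitary equivalence of the two cocycles.

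For (ii), topological triviality is immediate from \eqref{eq.Ca9}: on any loop $p:o\to o$, $z(p) = \pi_0(\varphi_o^*\varphi_o) = \bI$, so $\sigma_z$ is trivial and $z \in Z^1_t(\cA_{KM})$. For the statistics, I would pick $\hat o \perp o$ and take $U := z(\hat o,o) = \pi_0(\varphi_{\hat o}^*\varphi_o)$, a unitary transporting $\rho^z(o)$ to $\rho^z(\hat o)$. Choosing $a\supseteq o\cup\hat o$ and $e\subset a^\perp$, formula \eqref{Bd:3a-1} gives
\begin{equation*}
\rho^z(o)_a(U) \;=\; \pi_0\bigl(\varphi_o^*\,\varphi_e\,\varphi_{\hat o}^*\,\varphi_o\,\varphi_e^*\,\varphi_o\bigr).
\end{equation*}
Using \eqref{eq.Ca9a} to carry $\varphi_e$ rightwards past $\varphi_{\hat o}^*$ and $\varphi_o$ until it meets $\varphi_e^*$ (where $\varphi_e\varphi_e^* = \bI$), the inner expression reduces to $\pm\,\varphi_o^*\varphi_{\hat o}^*\varphi_o^2$; an analogous reduction of $\varepsilon(z,z) := U^*\rho^z(o)(U)$, using again \eqref{eq.Ca9a} and $(\varphi_o^*)^2\varphi_o^2 = \bI$, produces $\varepsilon(z,z) = -\bI$. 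Scalarity of $\varepsilon$ forces the associated sector to be Abelian with $d(z)=1$, while the sign identifies $\kappa(z)=-1$.

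The main obstacle is the careful sign accounting in the statistics computation, since each application of \eqref{eq.Ca9a} contributes a factor $-1$ and the net sign decides Fermi versus Bose. A secondary point is to ensure that $z$ has finite statistics, so that $d(z)$ and $\kappa(z)$ are defined in the first place; this can be achieved by constructing a conjugate cocycle $\bar z$ from the CAR adjoints $\varphi_o^*$ (the positron transporter) and verifying, by the same type of CAR manipulation, that $\rho^{\bar z}\circ\rho^z$ is equivalent to the trivial cocycle $\iota$, which in turn re-confirms $d(z)=1$.
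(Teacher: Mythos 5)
Parts (i) and the topological-triviality claim in (ii) are essentially the paper's own argument. For (i) you construct the same unitary intertwiner $t_a := \pi_0\bigl((\varphi'_a)^*\varphi_a\bigr)$ and run the same telescoping identity as the paper (the paper defines $\chi_o := T_o^*\psi(g_o)S_o$ and $t_o := \pi_0(\chi_o^*\varphi_o)$, which is the same thing). Topological triviality of $z$ via $z(p)=\pi_0(\varphi_o^*\varphi_a)$ and hence $z(p)=\bI$ on loops is exactly the paper's remark after \eqref{eq.Ca9}.

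For the statistics part, your strategy of computing a statistics operator of the form $U^*\rho^z(o)(U)$, with $U$ a charge transporter to a causally disjoint region, is equivalent to what the paper does (it computes $\eps_a=z(q)^*\times z(p)^*\cdot z(p)\times z(q)$ after specializing the paths, and one can check that with $z(p)=\bI$ this is exactly $z(q)^*\rho^z(a)\bigl(z(q)\bigr)$). However, the way you leave the sign bookkeeping vague (``$\pm$'') and defer wholesale to \eqref{eq.Ca9a} is a genuine gap, and in fact hides a trap: \eqref{eq.Ca9a} as printed contains a sign error. It states $\varphi_o\,\varphi_a=\varphi_a\,\varphi_o$ for $o\perp a$, but both $\varphi_o$ and $\varphi_a$ carry $\bU(1)$-charge $1$ (odd fermion parity), and the CAR $\{\psi(f_1),\psi(f_2)\}=0$ (which holds for \emph{all} test sections, not just causally disjoint ones) forces $\varphi_o\varphi_a=-\varphi_a\varphi_o$; the correct statement is that \emph{all} pairs among $\varphi_o,\varphi_o^*,\varphi_a,\varphi_a^*$ anticommute across causally disjoint diamonds. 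If you apply \eqref{eq.Ca9a} literally, your reduction produces $\varepsilon=+\bI$, i.e.\ Bose statistics, the wrong answer. Concretely: carrying $\varphi_e$ past $\varphi_{\hat o}^*$ and $\varphi_o$ gives one sign if $\varphi_e\varphi_o$ commute (literal \eqref{eq.Ca9a}) and the opposite sign if they anticommute (correct CAR), and it is precisely this sign that decides Fermi versus Bose. The paper's own displayed computation in the proof silently uses the corrected anticommutation (note the sign flip between its third and fourth lines, which only works if $\varphi_{o_1}\varphi_a=-\varphi_a\varphi_{o_1}$). So your argument goes through, but only after replacing \eqref{eq.Ca9a} by the correct anticommutation relations; you cannot leave the sign to a ``$\pm$'' and cite \eqref{eq.Ca9a}, because that reference, taken at face value, would produce the opposite conclusion. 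Finally, the ``secondary point'' about constructing $\bar z$ to guarantee finite statistics is superfluous: once $\varepsilon$ is shown to be the scalar $-\bI$, the statistics parameter is $-1$ and $d(z)=1$ and finiteness are automatic.
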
 
\begin{proof}
$(i)$ Let $g_o\in\cS_o(DM)$, $o\in KM$, be another choice of normalized spinors. As above, denote with $S_o$ and $T_o$ the 
isometries of $\cF^0_o$ making the operator $\chi_o:= T^*_o\psi(g_o)S_o$  a unitary of $\cF_o$ and of $\cF^1_o$. Then $z^\prime(a,o):= \pi_0(\chi_a^*\chi_o)$ is a 1-cocycle  of $Z^1_t(\cA_{KM})$ which is unitary equivalent to $z$.
In fact  $t_o:= \pi_0(\chi_o\phi^*_o)$ is a unitary of $\cA_o$ for any $o$ satisfying 
$t_a z(a,o)= \pi_0(\chi^*_a\phi_a)\pi_0(\phi^*_a\phi_o) = \pi_0(\chi^*_a\phi_o)= z^\prime(a,o) t_o$ i.e. 
$t: KM\in o\to t_o\in\cA_o$ is a unitary intertwiner of $(z,z^\prime)$. 

$(ii)$ It is enough to prove that the symmetry intertwiner
$\eps \in (z \times z , z \times z)$ is such that $\eps_a = -1$ for all $a$, where
\[
\eps_a \, := \, z(q)^*\times z(p)^* \cdot \, z(p)\times z(q) \, ,
\]
$p:a\to \tilde o$ and $q:a\to o$ are paths with $o\perp \tilde o$,
and $\times$ is the tensor product of $Z^1(\cA_{KM})$  \cite[Theorem 4.9]{Ruz05}. 
The latter is defined by the expressions
\[
z(q)^* \times z(p)^* := z(q)^* \, z(p_1) z(p)^* z(p_1)^*
\ \ \ , \ \ \ 
z(p)\times z(q) := z(p)\, z(p_1) z(q) z(p_1)^*
\, ,
\] 
with $p_1:o_1\to a$, $o_1\perp |q|$.
According to these relations we may take $o\perp a$, $\tilde o=a$ and $p : a \to a$ the trivial path.
So, $z(p)=1$. Moreover, by (\ref{eq.Ca9}) we have $z(q)= \pi_0(\varphi_o^*\varphi_a)$. 
Finally, we may take $o_1$ causally disjoint from $o$ and $a$ 
and $z(p_1)= \pi_0(\varphi_a^*\, \varphi_{o_1})$. In this way, 
%
\begin{align*}
z(p)^* \times z(q)^* \cdot \, z(p)\times z(q) & = 
\pi_0 ( \, \varphi_a^*\, \varphi_o \, \varphi_a^*\, \varphi_{o_1} \, \varphi_{o_1}^*\, \varphi_a  \cdot \,
\varphi_a^*\,\varphi_{o_1} \, \varphi_o^*\varphi_a \, \varphi_{o_1}^*\,\varphi_a \, )\\
& = \pi_0 ( \, \varphi_a^*\, \varphi_o \, \varphi_a^*\,\varphi_{o_1} \, \varphi_o^*\varphi_a \, \varphi_{o_1}^*\,\varphi_a \, ) \\
& = - \pi_0 ( \, \varphi_a^*\, \varphi_o \, \varphi_a^*\,  \varphi_o^*\, \varphi_{o_1} \,\varphi_a \, \varphi_{o_1}^*\, \varphi_a \, )  \\
& =  \pi_0 ( \, \varphi_a^*\, \varphi_o \, \varphi_a^*\,  \varphi_o^*\,\varphi_a\, \varphi_{o_1} \, \varphi_{o_1}^*\, \varphi_a \, ) \\
& =  - \pi_0 ( \, \varphi_a^*\, \varphi_o \, \varphi_o^*\,\varphi_a^*\, \varphi_a\, \varphi_a \, )
\,  = -1 \, ,
\end{align*} 
having used repeatedly (\ref{eq.Ca9a}) for the relations $o_1 \perp o,a$ and $o \perp a$.
We conclude that $\eps_a = -1$ and this completes the proof.
\end{proof}
Thus the superselection sector associated with $z$ has the same charge quantum number as an electron.  We now 
define the charge transporter associated with the conjugated Dirac field. In analogy to what has been done
for $\psi$ we choose a co-spinor $h_o \in\cS(D^*M)$  with $(h_o,h_o)_c=1$ for any $o\in KM$ and set  
$\bar\varphi_o:= {\bar W}^*_o\bar\psi(h_o) \bar{V}_o$, 
where ${\bar W}^*_o,\bar{V}_o$ are the isometries of $\cF^0_o$ making  $\bar\varphi_o$ a unitary of 
$\cF_o$ and an element of the spectral subspace $\cF^{-1}_o$. Then we define 
\begin{equation}
\label{concocy}
\bar{z}(a,o):= \pi_0(\bar{\varphi}^*_a\,  \bar{ \varphi}_o) \ , \qquad o\subseteq a  \ .  
\end{equation}
In complete analogy to what has been shown for $z$,  one can prove that  $\bar{z}$ is independent from the choice of the normalized 
co-spinor and it is a topologically trivial 1-cocycle obeying the Fermi statistics: $d(\bar z)=1$ and 
$\kappa(\bar z)=-1$. 

We now prove that $\bar z$ is the conjugate of $z$ in the sense of the theory of superselection sectors.
\begin{lemma}
\label{concat}
$\bar{z}$ is the conjugate of $z$ in the category $Z^1_{\rDHR}(\cA_{KM})$.
\end{lemma}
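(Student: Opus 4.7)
The plan is to exhibit a concrete unitary intertwiner $R:\iota\to \bar z\times z$ (and a companion $\bar R:\iota\to z\times \bar z$) in the category $Z^1_{\rDHR}(\cA_{KM})$, and then check that together with $\bar z$ they solve the conjugate equations. Since $d(z)=d(\bar z)=1$, each $R_o$ is a single operator (a unitary) in $\cA_o$, and the whole argument reduces to pointwise identities in the reference Hilbert space.

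First, I would construct the candidate intertwiner. The key observation is that $\varphi_o\in\cF^1_o$ and $\bar\varphi_o\in\cF^{-1}_o$ are unitaries of opposite charges, so that the product $\bar\varphi_o\varphi_o$ lies in the gauge-invariant subalgebra $\cF^0_o$. Applying the vacuum representation, one obtains a unitary
\[
R_o \ := \ \pi_0(\bar\varphi_o\varphi_o)\ \in\ \cA_o\,,
\]
and analogously $\bar R_o := \pi_0(\varphi_o\bar\varphi_o)\in\cA_o$. To check that the family $\{R_a\}_{a\in KM}$ is an intertwiner from $\iota$ to $\bar z\times z$, one uses the path-form $z(p)=\pi_0(\varphi_o^*\varphi_a)$, $\bar z(p)=\pi_0(\bar\varphi_o^*\bar\varphi_a)$ of the cocycles \eqref{eq.Ca9}, the definition of the tensor product in $Z^1_{\rDHR}(\cA_{KM})$, and the fact that the DHR endomorphism $\rho^z$ is the identity on causal complements \eqref{Bd:3a}. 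The intertwining relation then collapses to a telescoping computation with the unitaries $\varphi_\bullet$ and $\bar\varphi_\bullet$.

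Second, I would verify the conjugate equations
\[
(\bar R^*\times 1_z)\cdot(1_z\times R)\ =\ 1_z\,,\qquad
(R^*\times 1_{\bar z})\cdot(1_{\bar z}\times \bar R)\ =\ 1_{\bar z}\,,
\]
evaluated at a single diamond $o$; by the intertwining property this suffices. For dimension-1 objects both members are scalars in $\cA_o$, so the check amounts to showing that the product of the two sides equals $1$, which ultimately follows from the unitarity of $\varphi_o$ and $\bar\varphi_o$ together with the gauge invariance of the $R$-operators.

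The main obstacle, as usual for Fermionic sectors, is sign bookkeeping: the tensor product $\times$ in $Z^1(\cA_{KM})$ is built using the permutation symmetry $\eps$, which carries the statistical phase $\kappa(z)=-1$, and the CAR twisting relations \eqref{eq.Ca9a} produce minus signs whenever one slides a $\varphi$ or $\bar\varphi$ across a causally disjoint one. One has to pick the orderings in $R_o$ and $\bar R_o$ (and possibly a sign in one of them) so that all of these minus signs cancel in the two conjugate equations. A convenient control is to run the computation on the simplest configuration, with $a\perp o$, exactly as in the verification of $\eps_a=-1$ in the proof of Lemma~\ref{fermistat}; if the signs work out there, the full statement then follows from the path-independence of $z$ and $\bar z$ (topological triviality) together with the categorical naturality of the conjugate equations.
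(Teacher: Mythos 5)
Your plan is correct in spirit, and it is essentially equivalent to the paper's argument, but the paper takes a shorter route. Since $d(z)=d(\bar z)=1$, conjugacy is equivalent to showing that the tensor product $z\times\bar z$ is (unitarily equivalent to) the trivial cocycle $\iota$: the conjugate equations then hold automatically, so there is no need to build $R$ and $\bar R$ and chase the conjugate equations separately. The paper does exactly this, and the critical simplification that you hint at but do not pin down is the following: by Lemma~\ref{fermistat}(i) (and its analogue for $\bar z$), both cocycles are independent of the choice of normalized spinors/co\nobreakdash-spinors, so one may take $h_o:=f_o^\dagger$, which (after matching the isometries, $\bar W_o=V_o$, $\bar V_o=W_o$) forces $\bar\varphi_o=\varphi_o^*$. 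With this choice your candidate intertwiners collapse to $R_o=\pi_0(\bar\varphi_o\varphi_o)=\mathbbm{1}$ and $\bar R_o=\pi_0(\varphi_o\bar\varphi_o)=\mathbbm{1}$, the telescoping computation reduces to the paper's three\nobreakdash-line verification that $(z\times\bar z)(o,a)=1$, and the sign bookkeeping you worry about disappears (the only anticommutation used is between operators in causally disjoint regions, and gauge invariance of $\varphi_o\varphi_a^*$). If you instead keep $\bar\varphi_o$ generic, your plan is still correct but involves unnecessary bookkeeping; you should at least invoke $d(z)=1$ to avoid verifying the conjugate equations by hand, and you should state explicitly that you are exploiting the freedom in choosing the co\nobreakdash-spinor (the analogue of Lemma~\ref{fermistat}(i) for $\bar z$) to simplify.
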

\begin{proof}
This amounts to verifying triviality of the tensor product  $(z\times \bar{z})$ defined as  
\[
(z\times \bar{z}) (o,a) \, := \, z(o,a)\, z(p_1)\bar{z}(o,a)z(p_1)^* \ , \qquad  a \subseteq o \, ,
\]
where $p_1:o_1\to a$, with  $o_1 \perp o,a$, is an arbitrary path as in the previous proof \cite{Ruz05,BR08}. As $z$ and $\bar z$ do not depend on the choice of the normalized spinors and co-spinors, respectively, 
it is enough to prove the above triviality for a suitable choice of the normalized co-spinors. 
So if $f_o$, $W_o$ and $V_o$ are the normalized spinor and isometries defining $\varphi_o$, we  
take $h_o:= f^\dagger_o$ and notice
that $(h_o,h_o)_c= (f^\dagger_o,f^\dagger_o)_c= (f_o,f_o)_s=1$. Furthermore, 
as $\bar\psi(h_o)=\psi(h^\dagger_o)^*=\psi(f_o)^*$,   we may take, for the definition of $\bar\phi$, 
 $\bar W_o:= V_o$ and $\bar V_o:= W_o$. Then   
$\bar\varphi_o:= V^*_o \psi(f_o)^* W_o =  \varphi_o^*$ and  
\begin{align*}
(z\times \bar{z}) (o,a) &  = 
\pi_0 ( \,  \varphi_o^*\, \varphi_a \, \varphi_a^*\, \varphi_{o_1}  \bar\varphi_o^*\,\bar\varphi_a\, \varphi_{o_1}^*\,\varphi_a \, )  = 
\pi_0 (\varphi_o^*\, \varphi_{o_1}  \varphi_o\,\varphi^*_a\, \varphi_{o_1}^*\,\varphi_a \, ) \\
& = \pi_0 ( \, \varphi_o^*\, \varphi_{o_1}\, \varphi_{o_1}^*\,  \varphi_o\,\varphi^*_a\, \varphi_a \, )   = 1 
\end{align*}
where we have used the fact that  $o_1$ is causally disjoint from $a$ and $o$ and that $\varphi_o\,\varphi^*_a$ is an observable. 
\end{proof}

%
\begin{remark}
\label{rem.EC}
Performing tensor powers $z^{\times n}$ and $\bar{z}^{\times n}$, $n \in \bN$,
we obtain a structure $Z_{ec}^1(\cA_{KM})$ of DHR-sectors labeled by $\bZ$, where 
positive integers are associated to $z^{\times n}$ and
negative integers to $\bar{z}^{\times n}$
(we define $z^0$ as the reference representation).  
Using the techniques explained in the present section it is easily seen that $z^{\times n}$ may be defined
starting from products of field operators $\psi(f_1) \cdots \psi(f_n)$, 
and similarly $\bar{z}^{\times n}$ with products of conjugate field operators.
Thus it is natural to interpret $Z_{ec}^1(\cA_{KM})$ as the charge superselection rule.
\end{remark}

\section{Dirac fields in  background flat potentials}
\label{sec.DA}

The Aharonov-Bohm effect concerns the behaviour of electrically charged quantum particles
in a space $M$ where a classical background potential $A$ is defined,
with the property of having vanishing electromagnetic tensor $F := dA$.
This amounts to saying that $A$ is a closed de Rham $1$-form, $dA = 0$,
and we say that $A$ is a flat background potential.
Existence of non-trivial flat background potentials is equivalent to the property of having a
non-trivial de Rham cohomology, $H_{dR}^1(M) \neq 0$. 
By standard results, this implies that $\pi_1(M) \neq 0$.

Passing to a relativistic setting, the scenario we adopt for describing quantum interactions
with the flat background potential $A$ is given a spinor field on a globally hyperbolic spacetime $M$,
fulfilling the Dirac equation with interaction $A$.
In the present section we construct such interacting Dirac fields,
and show that they can be interpreted as superselection sectors of a given reference free Dirac field.

\subsection{Preliminaries on flat potentials}
\label{sec.DA1}

We start by collecting some properties of flat potentials,  closed  de Rham forms $A \in Z_{dR}^1(M)$,
in particular their relations with flat line bundles, and we establish a mapping
$A \mapsto \hat{A}$ with $\hat{A}$ a real 1-cocycle in the cohomology of $KM$.
This will be useful in the discussion of AB-sectors of the observable net of the free Dirac field.

\medskip

As a first step we note that, since diamonds $o \in KM$ are simply connected,
there are $C^\infty$ local primitives
$\phi : o \to \bR$
such that 
\[
d\phi_o(x) = A(x) \ , \qquad x\in o \in KM \, .
\] 
If $a,o \in KM$ and $a \cap o \neq \emptyset$, then 
$\{ d\phi_o - d\phi_a \}(x) = A(x) - A(x) = 0$,
thus the function $\phi_o - \phi_a$ is constant on each connected component of $o \cap a$ 
(that is, $\phi_o - \phi_a : o \cap a \to \bR$ is \emph{locally constant}).
On these grounds, defining 
\begin{equation}
\label{eq.DA1:1}
g_{oa}(x):= e^{i (\phi_o-\phi_a)(x)} \, \in \bU(1) \ , \qquad x\in a\cap o \, ,
\end{equation}
we get a family of locally constant functions fulfilling the cocycle relations 
$g_{oa} g_{ac} = g_{oc}$, as it can be trivially verified
on each overlap $o \cap a \cap c \neq \emptyset$.
The next result is standard in geometry: for a proof we refer the reader to \cite{VasBPS}:
\begin{lemma}
\label{lem.DA0}
Let $A \in Z_{dR}^1(M)$. Then $g := \{ g_{oa} \}$, defined as in (\ref{eq.DA1:1}),
is a set of transition maps for the flat line bundle $\cL_A  :=  \hat{M} \times_\si \bC$
defined as in (\ref{eq.Es}), where 
\[
\si([\ell]) \, := \, \exp i\oint_\ell A \ \ \ , \ \ \ [\ell] \in \pi_1(M) \, .
\]
\end{lemma}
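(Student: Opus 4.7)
The plan is in three steps: verify that $g = \{g_{oa}\}$ is a locally constant $\bU(1)$-valued \v{C}ech $1$-cocycle on the cover $\{o\}_{o \in KM}$, construct the associated flat Hermitean line bundle, and identify its monodromy representation with $\sigma$.

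First, since diamonds are simply connected and each $\phi_o \in C^\infty(o,\bR)$ is a primitive of $A$ on $o$, on any double overlap $o \cap a$ the difference $\phi_o - \phi_a$ satisfies $d(\phi_o - \phi_a) = A - A = 0$, hence is locally constant. Thus $g_{oa} = e^{i(\phi_o - \phi_a)}$ is a locally constant $\bU(1)$-valued function, and the algebraic identity $(\phi_o - \phi_a) + (\phi_a - \phi_c) = \phi_o - \phi_c$ on triple overlaps gives at once the cocycle relation $g_{oa}g_{ac} = g_{oc}$. Standard gluing then produces a flat Hermitean line bundle $\cL(g) \to M$: the trivializations are the local pieces $o \times \bC$, the Hermitean structure descends because the $g_{oa}$ are $\bU(1)$-valued, and the flat connection is the product connection in each trivialization (well-defined because the transitions are locally constant).

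Second, I would compute the holonomy of $\cL(g)$ around an arbitrary loop $\ell : [0,1] \to M$ and identify it with $\sigma([\ell])$. Choose a path-approximation $o_0, o_1, \ldots, o_n = o_0$ with $\ell([t_{i-1}, t_i]) \subset o_{i-1} \cap o_i$ and set $x_i := \ell(t_i)$. Since parallel transport of a flat section within each chart is trivial, the total holonomy is the ordered product of transition functions at the vertices,
\begin{equation*}
\mathrm{hol}(\ell) \ = \ \prod_{i=1}^n g_{o_i,o_{i-1}}(x_i) \ = \ \exp\Big(i \sum_{i=1}^n (\phi_{o_i}(x_i) - \phi_{o_{i-1}}(x_i))\Big).
\end{equation*}
The fundamental theorem of calculus, applied within $o_i$, gives $\int_{\ell|_{[t_{i-1},t_i]}} A = \phi_{o_i}(x_i) - \phi_{o_i}(x_{i-1})$, and a telescoping rearrangement using $x_n = x_0$ and $o_n = o_0$ reduces the exponent to $\oint_\ell A$ (up to the sign convention adopted for the direction of the transitions $g_{oa}$ versus $g_{ao}$, which is fixed by the definition so as to match $\sigma$). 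Well-definedness on $\pi_1(M)$ follows from $dA = 0$: two homotopic loops cobound a $2$-chain $\Sigma$, and by Stokes $\oint_{\ell_1} A - \oint_{\ell_2} A = \int_\Sigma dA = 0$.

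Third, since flat Hermitean line bundles over the connected manifold $M$ are classified up to isomorphism by their monodromy homomorphism $\pi_1(M) \to \bU(1)$, and since $\hat{M} \times_\sigma \bC$ is by construction the flat line bundle realizing $\sigma$, we obtain $\cL(g) \simeq \hat{M} \times_\sigma \bC = \cL_A$. The main obstacle in the argument is purely bookkeeping: matching the sign conventions in the telescoping computation with those of the associated-bundle construction so that the holonomy comes out as $e^{+i\oint A}$ rather than $e^{-i\oint A}$; everything else is standard flat-bundle / \v{C}ech-cohomology formalism.
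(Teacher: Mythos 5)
The paper defers this proof to \cite{VasBPS}, so a direct comparison of approaches is not possible, but your three-step route (verify that $g$ is a locally constant $\bU(1)$-valued \v{C}ech cocycle, compute the holonomy of the resulting flat bundle via a path-approximation of an arbitrary loop, and invoke the classification of flat Hermitean line bundles by their monodromy character) is the standard argument for this kind of statement and is correct in substance.

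The one step that should be made precise before the proof is complete is the telescoping. With the conventions you set up --- $\mathrm{hol}(\ell)=\prod_{i=1}^n g_{o_i,o_{i-1}}(x_i)$ and $\int_{\ell|_{[t_{i-1},t_i]}}A=\phi_{o_i}(x_i)-\phi_{o_i}(x_{i-1})$ --- the exponent does \emph{not} reduce to $\oint_\ell A$; it reduces to $-\oint_\ell A$. Indeed,
\begin{equation*}
\sum_{i=1}^n\bigl(\phi_{o_i}(x_i)-\phi_{o_{i-1}}(x_i)\bigr)+\oint_\ell A
=\sum_{i=1}^n\bigl(\phi_{o_i}(x_i)-\phi_{o_{i-1}}(x_{i-1})\bigr)
=\phi_{o_n}(x_n)-\phi_{o_0}(x_0)=0 ,
\end{equation*}
so $\mathrm{hol}(\ell)=\exp(-i\oint_\ell A)$. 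You rightly flag this as a bookkeeping matter, and it is: one must pair the direction in which the transition functions are composed against the sign built into the associated-bundle relation $(y\ell,v)\sim(y,\sigma(\ell)v)$ defining $\hat M\times_\sigma\bC$, and the two minus signs cancel once the conventions are fixed consistently. But as the proposal is written, the displayed holonomy formula is off by a sign, and step three (the appeal to the classification of flat line bundles by monodromy) does not literally close until this is reconciled --- so the sign check needs to be carried out rather than merely flagged.
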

\noindent Let now
\[
\pi_o : \cL_A \restriction_o \, \to M \times \bC \ \ \ , \ \ \ o \in KM \, ,
\]
denote local charts such that $g_{oa}\pi_a = \pi_o$ on $o \cap a$.
The previous Lemma implies that the relation $e^{-i\phi_o} \pi_o = e^{-i\phi_a} \pi_a$ holds for any $a \cap o \neq \emptyset$.
Thus the maps
$e^{-i\phi_o} \pi_o  : \cL_A \restriction_o \to M \times \bC$, $o \in KM$,
glue in the correct way and they induce the bundle isomorphism $\vartheta : \cL_A \to M \times \bC $ defined as 
\begin{equation}
\label{eq.DA1.2}
\vartheta \restriction_o \, := \,  e^{-i\phi_o} \pi_o \, , \qquad o\in KM \, .
\end{equation}
Note that $\vartheta$ locally looks like the multiplication by $e^{-i\phi_o}$.
Let now $a \subseteq o$; then $\phi_o - \phi_a$ is a constant real function on $a$
and we define 
\begin{equation}
\label{eq.DA1}
\hat{A}_{oa} \, := \, \phi_o - \phi_a  \, \in \bR
\ \ , \ \
\hat{A}_{ao} \, := \, - \hat{A}_{oa} 
\ \ , \ \ 
a \subseteq o
\, .
\end{equation}
The relations
$\hat{A}_{co} + \hat{A}_{oa} = \hat{A}_{ca}$
are clearly fulfilled for all $a \subset o \subset c \in KM$,
that is, $\hat{A}$ is a real cocycle in the cohomology of $KM$.
Note that by definition (\ref{eq.DA1:1}), we have
\begin{equation}
\label{eq.DA1:1'}
g_{oa}  \ = \  e^{i\hat{A}_{oa}} \ \ \ , \ \ \ a \subseteq o \, .
\end{equation}
The cocycle $\hat{A}$ can be used to express the holonomy of $A$.
In view of \cite[Eq.4.7]{VasQFT}, given a loop $\gamma : [0,1] \to M$ and $p_\gamma=(o_n,o_{n-1})*\cdots * (o_1,o_0)$ 
a poset approximation of $\gamma$, we have
\begin{equation}
\label{lem.DA1}
\oint_\gamma A  \ =  \ \hat A_{p_\gamma} \ := \ \sum^{n}_{i=1} \hat{A}_{o_{i}o_{i-1}} \, .  
\end{equation}

\subsection{The Dirac field interacting with a background flat potential}
\label{sec.DA2}

We construct a quantum Dirac field interacting with a background flat  electromagnetic potential,
and we describe it in terms of a family of Dirac fields locally defined on diamonds.
These fields yield a ''local coordinate" description of the interacting field, 
and will play an important role at the level of the associated nets of von Neumann algebras.

The interacting field shares with twisted fields in the sense of Isham \cite{Ish78,AI79}
the property of being defined on the space of sections of a twisted bundle. Yet it is not exactly a field of this type as we shall see in the following lines.

\medskip

Let $A = (A^\mu) \in Z_{dR}^1(M)$. Our task is the construction of a Dirac field $\psi_A$ such that
\[
 \psi_A\circ (i\slashed\nabla + \slashed A - m)   =  0 \ , 
\]
where  $\slashed A$ is defined in local frames as $\gamma^\mu_o A_\mu$ 
(note that here the components of $A$ have lower indices),
and since the local Dirac matrices $\gamma^\mu_o$
can be arranged to form a section of $T^*M \otimes {\bf end}DM$, 
%
%
we have $\slashed A \in \cS({\bf end}DM)$.

\medskip

\noindent We note, that if $\psi_A$ is given, then for any $o \in KM$ and $f \in \cS_o(DM)$ it turns out
\[
\begin{array}{lcl}
0 \ = \
\psi_A(  (i \slashed\nabla + \slashed A - m) (e^{i\phi_o}f) ) & = &
\psi_A( -\gamma^\mu_o (\partial_\mu \phi_o) e^{i\phi_o}f + i e^{i\phi_o} \slashed\nabla f +  ( \slashed A - m ) (e^{i\phi_o}f) \, ) \  \\ & = &
\psi_A( e^{i\phi_o}\, ( -\slashed A f + i\slashed\nabla f +  (\slashed A - m) f)  \, ) \  \\ & = &
\psi_A( e^{i\phi_o} (( i\slashed\nabla - m) f)  \, ) \, .
\end{array}
\]
Thus applying the local gauge transformation
$\psi_A \to \psi_{A,o} := \psi_A \circ e^{i\phi_o}$,
we find that $\psi_{A,o}$, as a field evaluated on the test space $\cS_o(DM)$, solves the free Dirac equation.

The idea of the previous computation is that $A$ appears as an exact 1--form on each (simply connected) diamond $o \in KM$,
thus it can be represented locally as a local gauge transformation making $\psi_A$ a free Dirac field.
Thus, to construct our field $\psi_A$, it comes natural to reverse the above argument and to start with a free Dirac field $\psi : \cS(DM) \to \cB(\cH)$  and then to define
\[
\psi_o : \cS_o(DM) \to \cB(\cH) \ \ \ , \ \ \ 
\psi_o(f) \, := \, \psi(e^{-i\phi_o}f) \ \ \ , \ \ \ 
o \in KM \, .
\]
This implies
\begin{equation}
\label{eq.DA2.0}
\begin{array}{lcl}
0 \ = \
\psi( ( i \slashed\nabla - m ) (e^{-i\phi_o}f) ) & = &
\psi( \gamma^\mu_o (\partial_\mu \phi_o) e^{i\phi_o}f + i e^{i\phi_o} \slashed\nabla f - m (e^{i\phi_o}f) \, ) 
\\ & = &
\psi( e^{i\phi_o} (( \slashed A + i\slashed\nabla - m ) f ) \, )  = 
\psi_o( (i\slashed\nabla + \slashed A - m ) f \, ) \, ,
\end{array}
\end{equation}
for $f \in \cS_o(DM)$. Thus each $\psi_o$ is a local solution of the Dirac equation with interaction $A$.
Yet for $a \subseteq o$ and $f \in \cS_a(DM)$ we find
$\psi_o(f) = \psi(e^{-i\phi_o}f) = \psi( e^{-i\hat{A}_{oa}} e^{-i\phi_a}f)$,
and we conclude
\begin{equation}
\label{eq.DA2.1}
\psi_o(f) \ = \ e^{-i\hat{A}_{oa}} \, \psi_a(f) \ \ \ , \ \ \ 
f \in \cS_a(DM) \ \ \ , \ \ \
a \subseteq o
\, .
\end{equation}
The above relations show that there is a topological obstruction to gluing the local fields $\psi_o$,
encoded by the cocycle $\hat{A}$. 
To get a globally defined and interacting Dirac field, we introduce the twisted Dirac bundle
\[
D_AM := DM \otimes \cL_A \to M \, ,
\]
which, by Lemma \ref{lem.DA0}, is endowed with "local charts"
$\pi_o : (D_AM)|_o \to DM|_o$, $o \in KM$. Here to be concise, we make an abuse of notation by identifying $\pi_o$ and $id_o \otimes \pi_o$,
where $id_o$ is the identity of $DM|_o$. Note that ${\bf end}(D_AM) \simeq {\bf end}DM$.

In the following lines we collect some simple properties of $D_AM$.
First, we note that sections $\varsigma \in \cS(D_AM)$ are in one-to-one correspondence with families 
$\{ \varsigma_o \}$, where each $\varsigma_o := \pi_o \circ \varsigma|_o$ is a section of $DM|_o$ and the relations
$\varsigma_o = g_{oa} \varsigma_a$
hold for $o \cap a \neq \emptyset$ with $g_{oa}$ defined in (\ref{eq.DA1:1}).
Secondly, we note that tensoring (\ref{eq.DA1.2}) by the identity of $DM$ induces the isomorphism
\begin{equation}
\label{eq.lem.DA2.1}
\vartheta : \cS(D_AM) \to \cS(DM) 
\end{equation}
such that $\vartheta(\varsigma) = e^{-i\phi_o}\varsigma_o$
for all $o \in KM$ and $\varsigma \in \cS(D_AM)$ with $\supp(\varsigma) \subseteq o$.
%
%
%
%
In particular, by definition of $\hat{A}$, we have
\begin{equation}
\label{eq.DA2.1a'}
e^{i\hat{A}_{oa}} \cdot id_a \ = \ \pi_o \circ \pi_a^{-1} \ = \ g_{oa} \ \ \ , \ \ \ a \subseteq o \, .
\end{equation}
Thus, if $\varsigma \in \cS_a(D_AM)$, then 
\begin{equation}
\label{eq.DA2.1a}
\varsigma_o \ = \ e^{i\hat{A}_{oa}} \varsigma_a \ \ \ , \ \ \ a\subseteq o  \, .
\end{equation}
Defining
\begin{equation}
\label{eq.DA2.1b}
\psi_{A,a}(\varsigma) \, := \, \psi_a(\varsigma_a) \, = \, \psi(e^{-i\phi_a}\varsigma_a)
\ \ \ , \ \ \ 
\varsigma \in \cS_a(D_AM)
\, ,
\end{equation}
we have that if $a \subseteq o$ then, combining (\ref{eq.DA2.1}) and (\ref{eq.DA2.1a}),
\[
\psi_{A,o}(\varsigma) \, = \, \psi_o(\varsigma_o) \, = \, 
e^{i\hat{A}_{oa}} \psi_o(\varsigma_a ) \, = \,
e^{i\hat{A}_{oa}} e^{-i\hat{A}_{oa}} \psi_a(\varsigma_a) \, = \, 
\psi_{A,a}(\varsigma) \, , \qquad \varsigma \in \cS_a(D_AM)\ .
\]
The previous computation shows that the operator $\psi_{A,a}(\varsigma)$ is independent of the choice of 
$a \in KM$ such that $\supp(\varsigma) \subseteq a$. 
By (\ref{eq.lem.DA2.1}) we find 
$e^{-i\phi_a} \varsigma_a =$ $\vartheta(\varsigma)$,
implying that each $\psi_{A,a}$ extends to the well-defined field
\[
\psi_A : \cS(D_AM) \to \cB(\cH) 
\ \ \ , \ \ \ 
\psi_A := \psi \circ \vartheta
\, .
\]

\begin{remark}
In the previous expression $\psi_A$ appears as a field defined on $D_AM$ rather than $DM$.
This kind of field was introduced and studied by Isham \cite{Ish78,AI79}.
Note that $\psi_A$ fulfills the CARs 
\[
\{ \psi_A(\varsigma_1)^*  ,  \psi_A(\varsigma_2) \}
\ = \
i \int_M \vartheta(\varsigma_1)^\dagger \, S(\vartheta(\varsigma_2)) 
\ \ \ , \ \ \ 
\varsigma_1 , \varsigma_2 \in \cS(D_AM) \, .
\]
\end{remark}

To evaluate the Dirac equation on $\psi_A$
it is convenient to give a description of $\slashed \nabla$ as an operator on sections of $\cS(D_AM)$
explicitly presented as tensors
$\varsigma =$ $f \otimes s$, $f \in \cS(DM)$, $s \in \cS(\cL_A)$.
To this end we define $\boldsymbol{d} := \vartheta^{-1} d \vartheta$, 
where $\vartheta$ is defined in (\ref{eq.DA1.2}) 
and $d : C^\infty(M,\bC) \to \cS(T^*M \otimes \bC)$ is the complex exterior derivative.
In this way we obtain a connection $\boldsymbol{d}$ on $\cL_A$
compatible with the inner product of $\cS(\cL_A)$ in the sense that 
$d( \Lb s,s' \Rb ) =$ $\Lb \boldsymbol{d} s , s' \Rb + \Lb s , \boldsymbol{d} s' \Rb$, $s,s' \in \cS(\cL_A)$.
The Leibniz rule for the Dirac connection leads to the following extension:
\begin{equation}
\label{eq.Lei}
\nabla( f \otimes s ) \, := \, \nabla f \otimes s + f \otimes \boldsymbol{d} s \, \in \cS(D_AM \otimes T^*M) \, .
\end{equation}
Before constructing the associated Dirac operator
we introduce some useful objects.
We start with the normalized section $e^{i\phi} \in \cS(\cL_A)$ obtained by applying the inverse of (\ref{eq.DA1.2}) 
to the constant section $1$ of $M \times \bC$. Note that by definition $e^{i\phi_o} = \pi_o \circ e^{i\phi}|_o$ for all $o \in KM$ (and this justifies our notation). Hence we can write
\begin{equation}
\label{eq.vT}
\vartheta(s) \ = \ \Lb e^{i\phi} , s \Rb  \ \ \ , \ \ \ s \in \cS(\cL_A) \, ,
\end{equation}
having regarded the complex function $\Lb e^{i\phi} , s \Rb$ as a section of $M \times \bC$.
We can write explicitly the complex 1--form 
$X_s := d(\Lb e^{i\phi},s \Rb) \in \cS(T^*M \otimes \bC)$, 
\begin{equation}
\label{eq.Xs}
X_s \ = \ -iA \cdot \Lb e^{i\phi},s \Rb  +  \Lb e^{i\phi}, \boldsymbol{d} s \Rb \, ,
\end{equation}
having used $i \partial^\mu \phi_o = i A^\mu$ and the fact that $\Lb \, \cdot \, , \, \cdot \, \Rb$ is conjugate linear in the first variable.
Also note that $\boldsymbol{d} s \in \cS(\cL_A \otimes T^*M)$, so the pairing $\Lb e^{i\phi}, \boldsymbol{d} s \Rb$ yields a complex 1--form.
Applying a contraction with local Dirac matrices $\gamma^\mu_o$, we get an operator $\slashed X_s \in \cS({\bf end}DM)$,
\[
\slashed X_s(f) \ = \ -i\slashed A f \cdot \Lb e^{i\phi},s \Rb + \Lb e^{i\phi}, \, \slashed f \otimes \boldsymbol{d} s \Rb
\ \ \ , \ \ \ 
f \in \cS(DM)
\, ,
\]
where, in local frames over $o \in KM$, 
\begin{equation}
\label{eq.fos}
\slashed f \otimes \boldsymbol{d} s \ := \ \gamma^\mu_o f \otimes \partial_\mu s \, \in \cS(D_AM) \, .
\end{equation}
Here $\partial_\mu$ has lower indices, thus $\partial_\mu s$ lives in $\cL_A \otimes TM$ 
and contraction by $\gamma^\mu_o$ drops components in $T^*M$ and $TM$.
Moreover, pairing by $e^{i\phi}$ yields $\Lb e^{i\phi} , \, \slashed f \otimes \slashed ds \Rb \in \cS(DM)$.
With the notation (\ref{eq.fos}), pairing (\ref{eq.Lei}) with the Dirac matrices yields
\begin{equation}
\label{eq.Lei'}
\slashed \nabla( f \otimes s ) \ = \ 
\slashed \nabla f \otimes s + \slashed f \otimes \boldsymbol{d} s  \, .
\end{equation}
In conclusion,
\begin{align*}
\slashed\nabla \circ \vartheta(f \otimes s) & = 
\slashed\nabla ( f \cdot \Lb e^{i\phi},s \Rb )  =
(\slashed\nabla f) \cdot \Lb e^{i\phi},s \Rb  ) + \slashed X_s(f) \\
& =
\slashed\nabla f \cdot \Lb e^{i\phi},s \Rb - i\slashed A f \cdot \Lb e^{i\phi},s \Rb + \Lb e^{i\phi}, \, \slashed f \otimes  \boldsymbol{d}s  \Rb  = 
\vartheta ( \,  (\slashed\nabla - i \slashed A ) f  \otimes s +  \slashed f \otimes \boldsymbol{d}s  \, ) \\
&  = 
\vartheta ( \,  (\slashed\nabla - i \slashed A)  (f \otimes s) \, ) \, ,
\end{align*}
having written $\slashed A (f \otimes s) := ( \slashed A f) \otimes s$ and used 
(\ref{eq.vT}),(\ref{eq.Xs}), (\ref{eq.Lei'}).
This implies
\begin{equation}
\label{eq.DA2.4}
(i\slashed\nabla + \slashed A - m) \psi_A(f \otimes s)   = 
\psi( \vartheta ( \, (i\slashed\nabla + \slashed A - m) \, (f \otimes s)  \, ))  =
\psi((i\slashed\nabla-m)(\vartheta(f \otimes s)))  =
0  ,
\end{equation}
as expected. 
Note that if one starts with the field $\psi_A$, then defining for all $o \in KM$
\[
\psi_o : \cS_o(DM) \to \cB(\cH) 
\ \ \ , \ \ \ 
\psi_o(f) \, := \, \psi_A(\pi_o^{-1}(f))
\]
we get a family of fields fulfilling (\ref{eq.DA2.1}) (see (\ref{eq.DA2.1a'})). Moreover, the chain of identities (\ref{eq.DA2.0}) shows that each field $\psi_o \circ e^{i\phi_o}$
fulfills the free Dirac equation and, again by (\ref{eq.DA2.1a'}), the family $\{ \psi_o \circ e^{i\phi_o} \}$
glue in the correct way to define a free Dirac field $\psi := \psi_A \circ \vartheta^{-1}$.
In conclusion:
\begin{itemize}
\item We accomplished our task of constructing a Dirac field $\psi_A$ interacting with $A$,
      by paying the price of switching from $DM$ to the twisted Dirac bundle $D_AM$;
\item $\psi_A$ is equivalently described by the family $\{ \psi_o \}$ fulfilling (\ref{eq.DA2.1})
      and such that each $\psi_o$ is gauge-equivalent to a free field $\psi$.
\end{itemize}

\subsection{Charge transporters and Aharonov-Bohm effect}
\label{sec.DA3}

In the present section we compare the net defined by $\psi$ and the ones 
defined by the interacting fields $\psi_A$, 
verifying that they all define the same observable net $\cA_{KM}$. 
We show that any interacting field defines an irreducible 1-cocycle
in $Z^1_{\rAB}(\cA_{KM})$ obeying the Fermi statistics and having topological dimension one.

\smallskip

As a first step, recalling (\ref{eq.DA2.1b}), we define
\[
\cF_{A,o} \, := \, \{ \psi_o(f) \, , \, f \in \cS_o(DM) \}^{\prime\prime} \ \ \ , \ \ \ o \in KM \, .
\]
Clearly $\cF_{A,o} = \cF_o$, where $\cF_o$ is the von Neumann algebra generated by the free field $\psi$.
The additional information carried by $A$ is, instead, encoded by (\ref{eq.DA2.1}), 
which imposes to define the *-monomorphisms $\jmath_{oa} : \cF_{A,a} \to \cF_{A,o}$,
\begin{equation}
\label{eq.FA}
\jmath_{oa}(T) \ := \ 
\eta(e^{-i\hat{A}_{oa}}) \, (T) \ = \  
U(e^{-i\hat{A}_{oa}}) \, T \, U(e^{i\hat{A}_{oa}})\ , \qquad  
a \subseteq o
\, ,
\end{equation}
where $\eta$ is the gauge action (\ref{gauge0}) and $U$ is the corresponding unitary action on the Fock space \eqref{gauge}. 
This definition allows to recover the relation between the locally defined fields $\psi_o$,
\[
\jmath_{oa}(\psi_a(f)) \ = \ e^{-i\hat{A}_{oa}} \psi_a(f) \ = \ \psi_o(f) \ \ \ , \ \ \ f \in \cS_a(DM) \, ,
\]
and by applying (\ref{eq.DA1}) and subsequent remarks, we find
\begin{equation}
\label{eq.Da3.2}
\jmath_{oa} \circ \jmath_{ae} \ = \ \jmath_{oe} \ \ \ , \ \ \ e \subseteq a \subseteq o \, .
\end{equation}
The pair $(\cF_A,\jmath)$ is called the \emph{field net twisted by $A$} and, at the mathematical level, it
defines a \emph{precosheaf} \cite{VasQFT}. These objects are more general than nets since,
instead of the inclusion maps, we have the non-trivial *-monomorphisms $\jmath_{oa}$.
As a consequence, 
if $T_o \in \cF_{A,o}$ and $T_a \in \cF_{A,a}$, then the correct way to perform their product is
\begin{equation}
\label{eq.Da3.1}
T_o \jmath_{oa}(T_a) \, \in \cF_{A,o} \, .
\end{equation}
We define the gauge-invariant von Neumann algebras
\[
\cF_{A,o}^0 \, := \, \{  S \in \cF_{A,o} \, : \, U(\zeta) S = S U(\zeta) \, , \, \forall \zeta \in \bU(1) \}
\ \ , \ \
o \in KM \, .
\]
This implies $\jmath_{oa}(S) = S$ for all $S \in \cF_{A,a}^0$, thus $\cF_A^0$ is a \emph{net},
coinciding with the gauge-invariant subnet $\cF^0$ of the free Dirac field since $\cF_{A,o}^0 = \cF_o^0$ for all $o \in KM$.

We now pass to the construction of charge transporters starting from the twisted field net $(\cF_A,\jmath)$.
To this end, we note that the operators $\varphi_o$, $o \in KM$,
defined in (\ref{eq.CaLoc}), can be regarded as unitaries
$\varphi_o$ of  $\cF_{A,o}$
carrying charge $1$. To construct the associated charge transporters, we invoke (\ref{eq.Da3.1}) and define
\begin{equation}
\label{eq.Da3.3}
z(a,o)  :=  \pi^0(\varphi_o^* \, \jmath_{oa}(\varphi_a))  = \pi^0( e^{-i\hat{A}_{oa}} \varphi_o^* \varphi_a)
\  , \qquad 
a \subseteq o \, .
\end{equation}
Here $z(a,o)$ is a unitary of $\cA_o$ and by \eqref{eq.Da3.2} it fulfills the cocycle relations,
\[
z(a,o) \, z(o,e) \ = \ z(a,e) \ \ \ , \ \ \ a \subseteq o \subseteq e \, ,
\]
thus we get  a $1$-cocycle   $z \in Z^1(\cA_{KM})$. \smallskip

Recalling the definitions of charged and topological component of a 1-cocycle given in Section \ref{Bb}, we can 
now prove the following theorem which yields a converse for Corollary \ref{Bd:9a2}. 
\begin{theorem}
\label{thm.D1}
Let $\cA_{KM}$ denote the observable net of the free Dirac field and let $A \in Z_{dR}^1(M)$. Then the
cocycle $z$  defined by \eqref{eq.Da3.3}  is an element of the category $Z^1_\rAB(\cA_{KM})$  satisfying  
\begin{equation}
      \label{eq.thm.D1.3}
      z(o,a) \ = \ \exp \Big( - i\oint_{\ell_{(o,a)}} A \Big) \cdot \, z_c(o,a) 
      \ \ \ , \ \ \ 
      o \subseteq a \, .
      \end{equation}
The charged component $z_\mathrm{c}$ obeys the Fermi statistics, 
and the holonomy of $z$ is given by 
\begin{equation}
\label{eq.thm.D1.1}
z(p_\ell) \ = \ \exp \Big( - i\oint_\ell A \Big) \, \bI \ , \qquad  [\ell] \in \pi_1(M)  \, .
\end{equation}
\end{theorem}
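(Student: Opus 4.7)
The strategy is to factorize the cocycle $z$ defined by (\ref{eq.Da3.3}) as the pointwise product of a scalar $\bU(1)$-valued cocycle encoding the background potential and the free-Dirac charge transporter of Section \ref{Cc}, and then to read off its topological and charged components. Explicitly, I would set, for any $a \subseteq o$,
\begin{equation*}
\chi(a,o) \, := \, e^{-i \hat A_{oa}}\, \bI \, , \qquad
\zeta(a,o) \, := \, \pi_0(\varphi_o^* \varphi_a) \, ,
\end{equation*}
so that by (\ref{eq.Da3.3}) one has $z(a,o) = \chi(a,o)\,\zeta(a,o)$. The cocycle $\zeta$ is, up to a relabeling of arguments, the free-Dirac transporter of (\ref{eq.Ca90}), and Lemma \ref{fermistat} tells us it lies in $Z^1_\rDHR(\cA_{KM})$ with $d(\zeta)=1$ and $\kappa(\zeta)=-1$.

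Because $\chi$ takes scalar values, the factorization extends multiplicatively to paths: for any path $q$ one has $z(q)=\chi(q)\,\zeta(q)$, with $\chi(q)=\exp\!\big(-i\sum_i \hat A_{o_i o_{i-1}}\big)$ summed over the elementary steps of $q$. For a loop $p_\ell$ approximating a closed curve $\ell$, topological triviality of $\zeta$ gives $\zeta(p_\ell)=\bI$, and identity (\ref{lem.DA1}) yields
\begin{equation*}
z(p_\ell) \ = \ \chi(p_\ell)\,\bI \ = \ \exp\Big(-i\oint_\ell A\Big)\,\bI \, ,
\end{equation*}
which is (\ref{eq.thm.D1.1}). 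In particular $\sigma_z$ takes scalar values, so $\tau(z)=1$; moreover the flat line bundle $\cP^z$ of Lemma \ref{Bd:6} has holonomy of the form (\ref{eq.expA}), so by the discussion preceding Corollary \ref{Bd:9a2} it is trivial.

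With $\tau(z)=1$ and $\cP^z$ trivial, Corollary \ref{Bd:9a2} yields the decomposition (\ref{eq.thm.D1.3}); alternatively, $u_z(\tilde a,a) = z(p_{e\tilde a}*(\tilde a,a)*p_{ae})$ from (\ref{Bd:2}) is an evaluation of $z$ on a loop over the pole $e$ approximating $\ell_{(\tilde a,a)}$, so by the preceding paragraph it equals $\exp(-i\oint_{\ell_{(\tilde a,a)}} A)\,\bI$. It remains to see that the charged component $z_c$ obeys the Fermi statistics. Fixing a path frame $P_e$, definition (\ref{Bd:1}) together with the topological triviality of $\zeta$ gives $z_c(\tilde a, a) = \chi(p_{\tilde a e}*p_{ea})\,\zeta(\tilde a, a)$, and one checks that the unitaries $t_a := \chi(p_{ae})\,\bI \in \cA_a$ satisfy $t_{\tilde a}\,\zeta(\tilde a,a)=z_c(\tilde a,a)\,t_a$, defining a unitary intertwiner $\zeta \to z_c$ in $Z^1_\rDHR(\cA_{KM})$. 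Thus $z_c$ is equivalent to $\zeta$ and inherits the Fermi statistics.

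The main technical point to be careful about is the construction of this intertwiner: one must verify that the scalars $\chi(p_{ae})$ assemble into a genuine unitary intertwiner in the DHR category, independent of the chosen path-frame representatives, so that $z_c$ and $\zeta$ really are equivalent objects of $Z^1_\rDHR(\cA_{KM})$. Once this is granted, the theorem follows by a direct combination of Lemma \ref{fermistat}, identity (\ref{lem.DA1}), and Corollary \ref{Bd:9a2}.
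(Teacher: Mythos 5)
Your proof is correct and follows essentially the same route as the paper's: you verify the holonomy formula \eqref{eq.thm.D1.1} by telescoping the phase factors against the topologically trivial free-Dirac transporter (via \eqref{eq.Ca9} and \eqref{lem.DA1}), and then exhibit the same unitary intertwiner $t_a = e^{-i\hat A_{p_{ae}}}\bI$ to identify $z_\mathrm{c}$ with the Fermi DHR cocycle \eqref{eq.Ca90}, before concluding with the splitting. The only cosmetic difference is that you make the factorization $z = \chi\cdot\zeta$ into a scalar cocycle times the free transporter explicit from the outset, whereas the paper carries the phase factor through the computations directly, and you optionally invoke Corollary~\ref{Bd:9a2} where the paper uses \eqref{Bd:3}.
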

\begin{proof}
The holonomy of $z$ is readily computed by using (\ref{lem.DA1}), (\ref{eq.Ca9}) and (\ref{eq.Da3.3}):
\[
z(p_\ell) \ := \ 
z(o_n,o_{n-1}) \cdots z(o_1,o_0) \ = \ 
e^{ - i\hat{A}_{p_\ell}} \, \bI \ = \
\exp \Big\{ - i\oint_\ell A \Big\} \, \bI \, ,
\]
where $p_\ell = (a,o_{n-1})*\cdots * (o_1,a)$ is a path-approximation of the loop $\ell : x \to x$, $x \in a \in KM$. Concerning the first part of the theorem, given a path frame $P_e$  and recalling the definitions of topological 
and charged component given in Section \ref{Bc} we have 
\[
z_{\mathrm{c}}(o,a)= z(p_{oe}*p_{ea})= \pi^0\big(e^{i (\hat A_{p_{oe}}- \hat A_{p_{ae}})} \varphi^*_{o}\varphi_a\big) = e^{i (\hat A_{p_{oe}}- \hat A_{p_{ae}})} 
\pi^0(\varphi^*_{o}\varphi_a)  \ . 
\]
Defining the unitary $t_a:= e^{-i\hat A_{p_{ae}}}\bI$ for any $a\in KM$, we find 
$t_{o}z_{\mathrm{c}}(o,a)= \pi^0(\varphi^*_{o}\varphi_a) t_a$ for any inclusion $a\subseteq o$. So $z_c$ 
is unitary equivalent  to the DHR cocycle \eqref{eq.Ca90} obeying the Fermi statistics. 
Finally, from the splitting formula \eqref{Bd:3}  and from the 
above relation we get 
\begin{align*}
z(\tilde a,a) &= \alpha_{\tilde ae}\big(u_z(\tilde a,a)\big)z_c(\tilde a a)= \alpha_{\tilde ae}\big(z(p_{e\tilde a}*(\tilde a,a)*p_{ae})\big)\,z_c(\tilde a a) \\
&  =  \alpha_{\tilde ae}\Big(\exp\Big(-i\oint_{\ell_{oa}} A\Big) \bI\Big)\, z_c(\tilde a, a) 
 = \exp\Big( -i\oint_{\ell_{oa}} A\Big) \, z_c(\tilde a, a) \, .
\end{align*}
\end{proof} 

We stress that the connection 1--form $A^z$ defined by $z$ in the sense of Corollary \ref{Bd:9a} is gauge equivalent
to $A$ up to a singular cocycle, that is, $A^z$ stands in the same de Rham class of $A$ modulo a cohomology class $\xi^z \in H^1(M)$. In fact, by iterating (\ref{eq.Bd:9a}) over the path approximation $p_\ell$, we obtain
\[
z(p_\ell) \ = \
\exp \Big( -i\sum_i \oint_{\ell_{(o_{i}o_{i-1})}} A^z \Big) \cdot z_\rc(p_\ell) \ = \
\exp -i\oint_\ell A^z \cdot \bI
\, ,
\]
having used topological triviality of $z_\rc$, that is, $z_\rc(p_\ell) = \bI$.
Thus by applying (\ref{eq.thm.D1.1}), we arrive to
\begin{equation}
\label{eq.thm.D1.4}
\exp \Big( - i\oint_\ell A \Big) \ = \ \exp \Big(-i\oint_\ell A^z\Big)
\end{equation}
for all loops $\ell$. 
Since the exponential map has kernel $\bZ$, and since any singular 1--cycle can be interpreted as a loop, we find
$(2\pi i)^{-1} \oint_c A \, {\mathrm{mod}} \bZ  =$ $(2\pi i)^{-1} \oint_c A^z \, {\mathrm{mod}} \bZ$,
for all 1--cycles $c$. The previous equality says that $A$ and $A^z$ define the same differential character 
in the sense of Cheeger and Simons, thus by the third exact sequence of \cite[Theorem 1.1]{CS85} we
conclude that the de Rham cohomology class of $A - A^z$ defines by integration the desired class $\xi^z \in H^1(M)$.

\subsection{Twisted nets and cocycles}
\label{sec.DA3'}

Indeed, the notion of twisted net can be given for arbitrary families $\sigma_{oa} \in \bU(1)$, $a \subseteq o \in KM$, fulfilling the cocycle relations
$\sigma_{oa}  \sigma_{ae} \ = \ \sigma_{oe}$.
By the results in \cite{RRV07}, this is equivalent to giving the morphism 
\[
\sigma : \pi_1(M) \to \bU(1) 
\ \ \ , \ \ \ 
\sigma([\ell]) \, := \, \sigma(p_\ell) \, := \, \sigma_{ao_{n-1}} \cdots \sigma_{o_1a}
\, ,
\]
where $p = (a,o_{n-1}) * \ldots * (o_1,a)$ is a path-approximation of the loop $\ell : [0,1] \to M$.
We call $\sigma$ a \emph{topological twist} of $\cF_{KM}$.
Defining $\jmath_{oa} := \eta(\sigma_{oa}) \restriction \cF_a$ we get the desired twisted field net, that we denote by 
$(\cF_\sigma,\jmath)$, $\cF_{\sigma,o} := \cF_o$, $\forall o \in KM$.
By the results in \cite{RVCX,RVkhom}, a twisted field net is nothing but
a representation of $\cF_{KM}$ on the net of Hilbert spaces $U_{oa} : \cH_a \to \cH_o$, $a \subseteq o$, 
with $\cH_o \equiv \cH$, $o \in KM$, and $U_{oa} := U(\sigma_{oa})$, $o \subseteq a$,
or, equivalently, a representation of $\cF_{KM}$ over the flat Hilbert bundle $H \to M$ with fibre $\cH$
and monodromy $U \circ \sigma : \pi_1(M) \to \cU(\cH)$.
%

\medskip

Let us now return on the observable net $\cA_{KM}$ and consider the charge 1 DHR-sector (\ref{eq.Ca90}),
that here we denote by
$z_1(o,a) := \pi_0(\varphi_o^* \varphi_a)$, $a \subseteq o$.
\begin{theorem}
\label{thm.D2}
Let $\cA_{KM}$ denote the observable net of the free Dirac field. 
Then there is a one-to-one correspondence between sectors $z \in Z_\rAB^1(\cA_{KM})$ with charge component $z_1$ 
and topological dimension 1, and twisted field nets $(\cF_\sigma,\jmath)$. 
\end{theorem}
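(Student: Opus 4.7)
The plan is to build mutually inverse maps between the two sets by directly generalizing the construction of \S \ref{sec.DA3}, replacing the specific cocycle $e^{-i\hat A_{oa}}$ of \eqref{eq.Da3.3} with an arbitrary $\bU(1)$-valued topological twist $\sigma_{oa}$.

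First, given $(\cF_\sigma, \jmath)$ with topological twist $\sigma$, I define, by direct analogy with \eqref{eq.Da3.3},
$$
z_\sigma(a,o) \ := \ \pi_0\big(\varphi_o^* \, \jmath_{oa}(\varphi_a)\big) \ = \ \sigma_{oa} \, z_1(o,a) \ , \qquad a \subseteq o \, .
$$
The cocycle relation for $\sigma$ combined with the one for $z_1$ established in \S \ref{Cc} yields $z_\sigma \in Z^1(\cA_{KM})$, and the telescoping of the operators $\varphi^*_{o_i}\varphi_{o_{i-1}}$ along a path-approximation $p_\ell$ of a loop $\ell$ (as in \eqref{eq.Ca9}) gives $z_\sigma(p_\ell) = \sigma([\ell]) \cdot \bI$. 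This holonomy lives in $\bU(1) \cdot \bI$, so by \eqref{Bd:5} the factor $\cA(M,o)$ reduces to $\bC\,\bI$ and $z_\sigma \in Z^1_\rAB(\cA_{KM})$ with $\tau(z_\sigma) = 1$. The charged component $(z_\sigma)_{\mathrm c}$ is identified with $z_1$ up to unitary equivalence via the intertwiner $t_a := \sigma(p_{ae})^* \bI$, exactly as in the proof of Theorem \ref{thm.D1}.

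Conversely, if $z \in Z^1_\rAB(\cA_{KM})$ is irreducible with $\tau(z) = 1$ and charge component $z_1$, then by \eqref{Bd:5} the algebra $\cA(M,o)$ of \eqref{Bd:4} is $\bC\,\bI$; consequently the topological component $u_z$ of \eqref{Bd:2} takes values in $\bU(1)\,\bI$ and, identified with a scalar, defines a $\bU(1)$-valued cocycle $\sigma^z$ on the poset $KM$. By \S \ref{sec.DA3'} this produces a twisted field net $(\cF_{\sigma^z},\jmath)$.

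It remains to verify that these two constructions are mutually inverse. Starting from a twist $\sigma$, the topological component of $z_\sigma$ recovers $\sigma$ itself by the holonomy computation of the second paragraph; starting from a sector $z$, the splitting \eqref{Bd:3} writes $z$ as the join of $u_z \cong \sigma^z\,\bI$ with its charged component $z_1$, so $z$ and $z_{\sigma^z}$ share both components and hence agree up to the unitary equivalence intrinsic to the splitting. The main delicacy is precisely this last point: both sides of the theorem are naturally equivalence classes --- of AB-sectors with a fixed charge, and of twisted field nets modulo the cohomology on $\pi_1(M)$ recalled in \S \ref{sec.DA3'} --- so one has to check that the correspondence descends to the quotient. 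This reduces to the path-frame independence noted after \eqref{Bd:3} and to the results of \cite{RRV07} identifying cohomologous poset cocycles with isomorphic twisted field nets.
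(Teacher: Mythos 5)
Your proposal is correct and follows essentially the same strategy as the paper's own proof: the forward direction defines $z_\sigma$ from the twist exactly as in \eqref{eq.Da3.3}, and the backward direction extracts a $\bU(1)$-valued cocycle from the topological component of $z$ (equivalently, from $\mathrm{hol}_{A^z}$) and appeals to the splitting \eqref{Bd:3} together with Corollary~\ref{Bd:9a} to close the loop. The only difference is expository: you spell out the mutual-inverse verification and the dependence on equivalence classes/path frames more explicitly than the paper, which compresses these into a reference to Corollary~\ref{Bd:9a}.
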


\begin{proof}
Let $\sigma_{oa} \in \bU(1)$, $a \subseteq o$, denote a cocycle and
$\sigma : \pi_1(M) \to \bU(1)$ denote the associated topological twist of $(\cF_\sigma,\jmath)$.
Then in accord with (\ref{eq.Da3.3}) we define
\[
z(a,o) \, := \, 
\pi^0(\varphi_o^* \jmath_{oa}(\varphi_a)) \, = \, 
\sigma_{oa} z_1(a,o)
\ \ , \ \ 
o \subseteq a
\, .
\]
A straightforward verification then shows that $z \in Z_\rAB^1(\cA_{KM})$ has the desired properties.
Conversely, let $z \in Z_\rAB^1(\cA_{KM})$ as in the statement of the Theorem. 
For all $[\ell] \in \pi_1(M)$ we set
$\sigma([\ell]) := {\mathrm{hol}}_{A^z}(\ell) \in \bU(1)$
and we define the associated twisted field net $(\cF_\sigma,\jmath)$. The cocycle $z'$ defined by $\sigma$ as in the previous part
of the previous theorem is then given by
\[
z'(o,a) \ = \ {\mathrm{Hol}}_{D^z}(\ell_{(o,a)}) \cdot z_1(o,a) \, ,
\]
and Corollary \ref{Bd:9a} implies $z=z'$.
\end{proof}

Considering in particular topological twists arising from closed 1--forms
we obtain the following Theorem, which resumes the results of the last two sections: 
\begin{theorem}
\label{thm.D1'}
Let $\cA_{KM}$ denote the observable net of the free Dirac field.
Then for any closed de Rham form $A \in Z_{dR}^1(M)$ there are:
\begin{enumerate}
\item A twisted Dirac field $\psi_A : \cS(D_AM) \to \cB(\cH)$ solving the Dirac equation with interaction $A$;
\item A twisted field net $(\cF_A,\jmath)$ with topological twisting $\sigma := \exp-i\oint A$;
\item A cocycle $z_A \in Z_\rAB^1(\cA_{KM})$ with charge component $z_1$ and holonomy 
      \begin{equation}
      \label{eq.thm.D1.1'}
      z_A(p_\ell) \ = \ \exp \Big\{ - i\oint_\ell A \Big\} \, \bI \ , \qquad [\ell] \in \pi_1(M)  \, ,
      \end{equation}
      where $p_\ell$ is a path approximation of $\ell$.
\end{enumerate}
If $H_1(M)$ has no torsion elements, then any irreducible $z \in Z_\rAB^1(\cA_{KM})$ 
with charge component $z_1$ and topological dimension 1 is equivalent to a cocycle of the type $z_A$.
\end{theorem}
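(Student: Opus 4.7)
The plan is to recognize that Theorem \ref{thm.D1'} is largely a synthesis of what has already been established in Sections \ref{sec.DA2} and \ref{sec.DA3}, with the converse direction being a new ingredient drawn from Theorem \ref{thm.D2} and Corollary \ref{Bd:9a2}. First, for the three existence statements, I would unpack each of them as a direct invocation of an already-proved construction: for (1), the field $\psi_A := \psi \circ \vartheta$ built via the isomorphism $\vartheta : \cS(D_AM) \to \cS(DM)$ in \S\ref{sec.DA2} satisfies the interacting Dirac equation by the chain of identities culminating in \eqref{eq.DA2.4}; for (2), the twisted field net $(\cF_A,\jmath)$ is defined by \eqref{eq.FA} using the gauge action, and its topological twist is read off from the cocycle $\hat A_{oa} = \phi_o - \phi_a$ via \eqref{lem.DA1}, giving exactly $\sigma([\ell]) = \exp\bigl(-i\oint_\ell A\bigr)$; for (3), the cocycle $z_A$ defined by \eqref{eq.Da3.3} has charge component equivalent to $z_1$ and the holonomy formula \eqref{eq.thm.D1.1'} is precisely the content of Theorem \ref{thm.D1}.

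For the converse, let $z \in Z^1_\rAB(\cA_{KM})$ be irreducible with charge component $z_1$ and $\tau(z) = 1$. By Theorem \ref{thm.D2}, $z$ corresponds to a twisted field net $(\cF_\sigma,\jmath)$ for a unique topological twist $\sigma : \pi_1(M) \to \bU(1)$, and by construction this $\sigma$ coincides with the holonomy ${\mathrm{hol}}_{A^z}$ of the flat $\bU(1)$-connection $A^z$ produced by Lemma \ref{Bd:6}. The assumption that $H_1(M)$ has no torsion translates (by the Universal Coefficient Theorem and the discussion preceding Corollary \ref{Bd:9a2}) into $H^2(M)$ being torsion-free, so that $\delta_2(\cP^z) = 0$. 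Corollary \ref{Bd:9a2} then yields a closed 1--form $A \in Z^1_{dR}(M)$ --- obtained, for instance, by picking any smooth representative of a de Rham class pairing correctly against 1--cycles --- such that $\mathrm{hol}_{A^z}([\ell]) = \exp\bigl(-i\oint_\ell A\bigr)$ for every loop $\ell$.

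Finally, I would feed this $A$ into the forward construction to build $z_A$ and use the one-to-one correspondence of Theorem \ref{thm.D2} to conclude $z \simeq z_A$: both cocycles have the same charge component $z_1$ and the same topological twist $\sigma$, hence they are equivalent in $Z^1_\rAB(\cA_{KM})$. The main obstacle I anticipate is purely bookkeeping: making sure that the identification $\sigma = {\mathrm{hol}}_{A^z}$ coming out of Theorem \ref{thm.D2} matches $\exp(-i\oint A)$ with the correct sign and that no torsion ambiguity slips in when passing from the class $\xi^z \in H^1(M)$ (noted just after Theorem \ref{thm.D1}) to the integral representation of $\sigma$. Everything else is an assembly of results already proved in the paper.
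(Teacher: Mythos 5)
Your proposal is correct and follows essentially the same route as the paper's own proof: the three existence statements are read off as citations to \S\ref{sec.DA2} and \S\ref{sec.DA3} (respectively \eqref{eq.DA2.4}, \eqref{eq.FA}, and Theorems \ref{thm.D1}/\ref{thm.D2}), and the converse is reduced to the observation that the torsion-free hypothesis forces $\delta_2(\cP^z)=0$ and hence, by the discussion surrounding \eqref{eq.expA}, the holonomy $\sigma$ admits an integral representation $\exp(-i\oint A)$, after which Theorem \ref{thm.D2} gives $z\simeq z_A$. The only small imprecision is attributing the production of the closed 1--form $A$ to Corollary \ref{Bd:9a2} itself; that corollary uses such an $A^z$, while the 1--form's existence actually comes from the Cheeger--Simons exact-sequence argument recalled just before it --- but this does not alter the substance, and your flagged sign-convention concern is indeed the only bookkeeping point to keep straight.
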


\begin{proof}
The twisted Dirac field $\psi_A$ has been constructed in \S \ref{sec.DA}.
Starting from $\psi_A$, we constructed in (\ref{eq.FA})
the twisted field net $(\cF_A,\jmath)$, having the required twisting.
The cocycle $z_A$ is obtained by the previous theorem and has the form
\[
z_A(o,a) \ = \ \exp \Big\{ -i \oint_{\ell_{(o,a)}} A \Big\} \cdot \, z_c(o,a) 
\ \ \ , \ \ \ 
o \subseteq a \, ,
\]
from which (\ref{eq.thm.D1.1'}) follows.
Finally, if $H_1(M)$ has no torsion, then by the considerations after (\ref{eq.expA})
we conclude that any holonomy $\sigma : \pi_1(M) \to \bU(1)$ is of the desired type $\exp -i\oint A$.
\end{proof}

\section{Conclusion and outlooks} 
\label{sec.E}

In the present paper we studied the superselection sectors defined on curved spacetimes \cite{BR08}
by analyzing them in the specific model of the observable net of the free Dirac field.
We provided the expected physical interpretation of the sectors,
showing that they correspond to background flat potentials interacting with Dirac fields.
The correspondence is essentially one-to-one, 
because potentials having the same Aharonov-Bohm phase define the same sector,
and the phase is the actual observable quantity.
The picture is then completed in terms of twisted field nets, 
whose inclusion morphisms are designed to reproduce the relative phases
that appear when one tries to describe the interacting Dirac field in terms of "local charts" (\ref{eq.DA2.1}).
Twisted field nets are, on turns, in one-to-one correspondence with superselection sectors.

\medskip
 
For the physical interpretation of our results 
a crucial point is that the background flat potentials can be reconstructed having merely as input 
the localized loop observables (\ref{Bc:1}) defined by the sectors.
Thus the potential is a byproduct of the sector, 
which, as well-known, corresponds to a state of the observable net,
in general affected by the spacetime topology.
We then conclude that the potential is codified in the preparation of the state.

\medskip

\noindent In particular, in the case of the classical Aharonov-Bohm effect, we argue that:
\begin{itemize}
\item When the experimenter shields the solenoid, the space where the charged particles are confined acquires
      a non-trivial topology, with fundamental group $\bZ$.
\item Switching on the magnetic field $\vec{B}$ inside the solenoid makes the system fall into a superselection sector of AB type,
      labeled by the Aharonov-Bohm phase $\gamma \mapsto \exp -i \oint_\gamma A$.
      This fits the abstract discussion by Morchio and Strocchi \cite{MS07}, as well as old results in path-integral quantization
      (see \cite{Hor80} and references cited therein).
      Thus the presence of the non-trivial potential $A$, $d\vec{A}=\vec{B}$, is regarded as part of the preparation of the state:
      if the experimenter switches off the magnetic field, $\vec{B} = 0$, then we have a topologically trivial sector.
\end{itemize}
The lesson that we learn is that $A$ is interpreted as a generalized DHR-charge of Aharonov-Bohm type,
and can be reconstructed using exclusively localized observables.

\medskip

Finally we remark that for the construction of our interacting  field $\psi_A$ it is essential that the twisting bundle $\cL_A$ is topologically trivial. 
Yet in general the twisting bundle 
\[
\cL_z \, := \, \hat{M} \times_\sigma \bC^n \, ,
\]
where $\sigma : \pi_1(M) \to \bU(n)$ is as usual the holonomy representation defined by $z \in Z^1_\rAB(\cA_{KM})$,
is non trivial. This may occur when:
\begin{enumerate}
\item $n=1$ and the homology $H_1(M)$ has torsion
      {\footnote{For examples, see \cite{Fre86}. EV thanks M. Benini and A. Schenkel for remarking this fact.}};
\item $n>1$, a case that appears for $\pi_1(M)$ non-Abelian or for Dirac fields with non-Abelian gauge group. Under this hypothesis the argument for proving triviality of $\cL_A$ fails.
      We note that this is the case of explicit physical interest: for example, two parallel solenoids in the Aharanov-Bohm apparatus 
      yield $\pi_1(M)$ isomorphic to the free group with two generators.
\end{enumerate}
Even if there exists no problem in defining the twisted field net $(\cF_\sigma,\jmath)$ in these cases,
we believe that it is desirable from the point of view of physical interpretation to construct the interacting field in correspondence of the sector,
thus this point is object of a work in progress.
One possible approach may be to embed $\cL_z$ into some trivial bundle $M \times \bC^m$, which always exists for $m$ great enough. 
As an alternative we may directly take twisted Dirac bundles of the type
$D_zM \, := \, DM \otimes \cL_z$,
and construct a twisted Dirac field 
$\psi_z : \cS(D_zM) \to \cB(\cH)$
in sense of Isham.


\appendix 

\section{Results on topological sectors}
\label{App}
This appendix is devoted to proving that one can avoid using punctured Haag duality in the analysis of superselection sectors; 
only Haag duality is enough.\smallskip

We consider the observable net $\cA_{KM}$ defined in a representation satisfying Haag duality. The key property 
of 1-cocycles that we need in this appendix is the following: for any pair $a,a_1\in KM$ in the causal complement of $o\in KM$ one has that \cite[Corollary 1]{BR08}  
\begin{equation}
\label{app:1}
z(q)\in\cA^\prime_o \ , \qquad  q:a\to a_1 \ .
\end{equation}
which is a consequence of homotopy invariance of 1-cocycles and of pathwise connectedness of the causal complement of a diamond. 
\begin{lemma}
\label{app:2}
Let $\cA_{KM}$ be a causal net satisfying Haag duality. Given an inclusion of diamonds $o\subseteq \tilde o$, 
let $a\perp \tilde o $ and $p:a\to o$. Then the following properties hold: 
\begin{itemize}
\item[i)]  The adjoint action $z(p) \cA_{\tilde o } z(\bar p)$ 
is independent of the choice of $p$ and $a$.
\item[ii)]  $z(p) \cA_{\tilde o} z(\bar p) \subseteq \cA_{\tilde o}$. 
\end{itemize}
\end{lemma}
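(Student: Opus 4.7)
My plan is to establish (i) first, because (ii) will be obtained from (i) by combining it with Haag duality and property (\ref{app:1}). The recurring trick throughout is that, although the paths involved may pass through regions that meet $\tilde o$ (so $z(p)$ need not be literally localized in $\tilde o^\perp$), homotopy invariance of $1$-cocycles together with connectedness of the causal complement of any diamond lets one substitute an equivalent path whose value lies in the commutant of the relevant local algebra.

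For (i), I would take two choices $p:a\to o$ and $p':a'\to o$ with $a,a'\perp\tilde o$ and compare $z(p)Az(\bar p)$ with $z(p')Az(\bar{p'})$ for $A\in\cA_{\tilde o}$. The natural object is the composition $\bar{p'}*p:a\to a'$, whose value under $z$ equals $z(\bar{p'})z(p)$ by the cocycle identity. Both endpoints $a$ and $a'$ lie in the causal complement of $\tilde o$, so (\ref{app:1}) applies with $o_0:=\tilde o$, giving $z(\bar{p'}*p)\in\cA'_{\tilde o}$. Consequently $z(\bar{p'}*p)$ commutes with $A$, and multiplying by $z(\bar p*p')$ on the right yields $z(\bar{p'})z(p)Az(\bar p)z(p')=A$, which rearranges to $z(p)Az(\bar p)=z(p')Az(\bar{p'})$ as desired.

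For (ii), I would exploit Haag duality to write $\cA_{\tilde o}=\bigcap\{\cA'_b:b\perp\tilde o\}$, so it suffices to show that, for every diamond $b\perp\tilde o$ and every $B\in\cA_b$, one has $[z(p)Az(\bar p),B]=0$. Since $b\perp\tilde o$, causality already gives $[A,B]=0$, so the task reduces to producing a representative $p':a'\to o$ with $z(p')\in\cA'_b$. Here I would invoke part (i) to replace the given path $p:a\to o$ by a new path $p':a'\to o$ where the diamond $a'$ is chosen to satisfy both $a'\perp\tilde o$ and $a'\perp b$; then the endpoints $a'$ and $o$ of $p'$ both lie in the causal complement of $b$ (using $o\subseteq\tilde o$ and $b\perp\tilde o$), so (\ref{app:1}) applied with $o_0:=b$ gives $z(p')\in\cA'_b$. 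Combining the three commutations then yields $[z(p')Az(\bar{p'}),B]=0$, and invoking (i) once more transfers this to $z(p)Az(\bar p)$. Intersecting over all such $b$ and applying Haag duality concludes.

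The main obstacle I anticipate is the auxiliary geometric step in (ii): producing a diamond $a'$ simultaneously causally disjoint from $\tilde o$ and from $b$. This requires the set $(\tilde o\cup b)^\perp$ to contain a diamond, which should follow from the fact that diamonds form a base of the topology of $M$ and that in a globally hyperbolic spacetime the causal complement of a compact region is a non-empty open set; if necessary, one may reduce to a one-diamond argument by first noting $b\perp\tilde o$ and picking $a'$ inside a diamond neighbourhood of a point in the common causal complement. Modulo this point, both claims then follow from the two uses of (\ref{app:1}) described above.
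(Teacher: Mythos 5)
Your proof is correct and follows essentially the same route as the paper: part (i) via applying \eqref{app:1} to the composite loop $\bar{p'}*p$ (resp.\ $\bar q * p$) between two endpoints in $\tilde o^\perp$, and part (ii) via Haag duality, choosing an auxiliary diamond causally disjoint from both $\tilde o$ and $b$, and then combining (i) with a second application of \eqref{app:1}. The geometric point you flag — finding a diamond causally disjoint from both $\tilde o$ and $b$ — is also left implicit in the paper, so your argument is no less complete than the original.
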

\begin{proof}
$(i)$ If $q: a_1\to o$ with $a_1\perp \tilde o $ observe that 
\[
z(p) A z(\bar p) = z(q) z(\overline {q}*p) A  z(\bar p* q) z(\bar q) = z(q) A z(\bar q) \ , \qquad A\in \cA_o
\] 
where we have used \eqref{app:1} as $\bar q*p:a\to a_1$ and $a,a_1$ are in the causal complement
of $\tilde o$. $(ii)$ Here we use Haag duality. Take $o_1\perp \tilde o$ and $B\in\cA_{o_1}$.  
Since $o_1\perp \tilde o$ we can find $a_1$ with $a_1\perp \tilde{o}, o_1$  and a path $q:a_1\to o$. By the just proved result we have that for any $A\in\cA_{\tilde o}$ 
\[
z(p) A z(\bar p) B = z(q) A z(\bar q) B  = z(q) A B z(\bar q) = z(q) BA z(\bar q) = Bz(q) A z(\bar q)= Bz(p) A z(\bar p)
\]
where, again,  we have used \eqref{app:1} as $q:a_1\to o$ and $a_1$, $o$ are in the causal complement of $o_1$.
\end{proof}
We can now prove \cite[Lemma 4.5(i)]{Ruz05}.  
\begin{proposition}
\label{app:3}
Let $\cA_{KM}$ be a causal net satisfying Haag duality. Given $o,a\in KM$ with $o\perp a$. Then, for any 
inclusion $\tilde a\subseteq a$ and any path $p:\tilde a\to o$  there holds
\[
z(p) \cA'_a z(\bar p)\subseteq \cA'_a \ . 
\]
\end{proposition}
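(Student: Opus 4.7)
The plan is to reduce the inclusion $z(p)\cA'_a z(\bar p)\subseteq \cA'_a$ to a dual statement about $\cA_a$ which is already available from Lemma \ref{app:2}(ii). The key observation is that the values $z(p)$ and $z(\bar p)=z(p)^*$ are unitary operators on $\cH_0$, and conjugation by a unitary commutes with taking the commutant of a von Neumann algebra. Consequently, one has the standard identity
\[
\bigl(z(p)\,\cA'_a\, z(\bar p)\bigr)'\ =\ z(p)\,\cA_a\, z(\bar p).
\]
Taking commutants on both sides of the desired inclusion then transforms it into the equivalent statement $\cA_a \subseteq z(p)\,\cA_a\,z(\bar p)$, which, using unitarity once more to move the $z(p)$, $z(\bar p)$ to the opposite side, becomes
\[
z(\bar p)\,\cA_a\, z(p)\ \subseteq\ \cA_a.
\]

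The second step is then to produce this latter inclusion as a direct instance of Lemma \ref{app:2}(ii). To fit its hypotheses, I would relabel: take $\tilde a$ as the "inner" diamond, $a$ as the "outer" diamond (the inclusion $\tilde a\subseteq a$ is the given one), and $o$ as the perpendicular diamond (using $o\perp a$). The path required by the lemma is one going from the perpendicular diamond into the inner one, i.e.\ from $o$ to $\tilde a$, and this is precisely $\bar p$. With this identification the conclusion of Lemma \ref{app:2}(ii) reads $z(\bar p)\,\cA_a\,z(p)\subseteq \cA_a$, exactly what we need.

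An equivalent, more hands-on way to run the same argument, which I would probably use in the write-up to keep things self-contained, is as follows. Pick $A\in\cA'_a$ and $B\in\cA_a$; we want $[z(p)Az(\bar p),B]=0$. Conjugating by $z(\bar p)$ on the left and $z(p)$ on the right (both unitaries, so this preserves the vanishing of the commutator), this reduces to showing $[A, z(\bar p)B z(p)]=0$. But by the application of Lemma \ref{app:2}(ii) described above, $z(\bar p)B z(p)\in\cA_a$, and since $A\in\cA'_a$ the commutator vanishes.

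The argument is short and essentially formal; the only subtlety is keeping straight which diamond plays which role when invoking Lemma \ref{app:2}(ii), since the geometric setup of Proposition \ref{app:3} is a mirror image of that lemma's setup (here the path $p$ goes \emph{out of} the small sub-diamond $\tilde a$ of $a$, rather than \emph{into} a sub-diamond of $\tilde o$). All of the deeper content, in particular the use of Haag duality and of the homotopy-invariance property \eqref{app:1}, is already absorbed into Lemma \ref{app:2}; nothing beyond this needs to be invoked here.
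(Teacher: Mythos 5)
Your proof is correct and follows essentially the same route as the paper's: both reduce the claim, via the unitarity of $z(p)$, to the inclusion $z(\bar p)\,\cA_a\, z(p)\subseteq \cA_a$ obtained from Lemma~\ref{app:2}(ii), and then use commutation of $A\in\cA'_a$ with $z(\bar p)Bz(p)\in\cA_a$; your "hands-on" version is word-for-word the paper's computation. The relabeling of Lemma~\ref{app:2}(ii) (inner $=\tilde a$, outer $=a$, perpendicular $=o$, path $=\bar p$) that you spell out is exactly the step the paper leaves implicit.
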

\begin{proof}
Take $B\in\cA_a$,  $A\in\cA^\prime_a$ and $p$ as in the statement. By assumption we have that 
$\bar p:o\to \tilde a $ and that $o\perp \tilde a$. By the previous Lemma 
we have that $z(\bar p) B z(p)\in \cA_a$;
hence 
\[
z(p) A z(\bar p) B = z(p) A z(\bar p) B z(p) z(\bar p) = 
 z(p) z(\bar p) B z(p) A z(\bar p) = B z(p) A z(\bar p) \ ,   
\] 
completing the proof. 
\end{proof}
This result implies that \eqref{Bd:3a-1} gives an endomorphism
$\rho^z(o)_{\tilde a}:\cA^\prime_a\to \cA^\prime_a$ for any $\tilde a\subseteq a$ and $o\perp a$
and that we can reply the analysis of superselection sectors without making use of punctured Haag duality. \smallskip

\noindent \textbf{Acknowledgements.} G. Ruzzi acknowledges the MIUR Excellence Department Project awarded
to the Department of Mathematics, University of Rome Tor Vergata, CUP E83C18000100006.

{\small

}

\end{document}